\definecolor{applegreen}{rgb}{0.55, 0.71, 0.0}
\newtheorem{theorem}{Theorem}
\newtheorem{lemma}[theorem]{Lemma}
\begin{document}

\title{Structure-Preserving Community In A Multilayer Network: Definition, Detection, And Analysis}
\author[1]{Abhishek Santra\thanks{abhishek.santra@mavs.uta.edu}}
\author[1]{Kanthi Sannappa Komar\thanks{kanthisannappa.komar@mavs.uta.edu}}
\author[2]{Sanjukta Bhowmick\thanks{sbhowmick@unomaha.edu}}
\author[1]{Sharma Chakravarthy\thanks{sharma@cse.uta.edu}}
\affil[1]{IT Lab, CSE Department, University of Texas at Arlington, Texas, USA}
\affil[2]{Department of Computer Science, University of North Texas, Denton, Texas, USA}

\maketitle

\begin{abstract}
Multilayer networks or MLNs (also called  multiplexes or network of networks) are being used extensively for modeling and analysis of data sets with multiple entity and feature  types as well as their relationships. As the concept of communities and hubs are used for these analysis, a \textbf{structure-preserving} definition for them on MLNs (that retains the original \textit{MLN structure} and node/edge labels and types) and its efficient detection are critical. There is no structure-preserving definition of a community for a MLN as most of the current analyses aggregate a MLN to a single graph.

Although there is consensus on community definition for single graphs (and detection packages) and to a lesser extent for homogeneous MLNs, it is lacking for heterogeneous MLNs. In this paper, we not only provide a structure-preserving definition for the first time, but also its efficient computation using a decoupling approach, and discuss its characteristics \& significance for analysis.

The proposed decoupling approach for efficiency combines communities from individual layers to form a \textit{serial k - community} for connected k layers in a MLN. We propose several weight metrics for composing layer-wise communities using the bipartite graph match approach based on the analysis semantics. Our proposed approach has a number of advantages. It: i) leverages extant single graph community detection algorithms, ii) is based on the widely-used maximal flow bipartite graph matching for composing k layers, iii) introduces several weight metrics that are customized for the community concept, and iv) experimentally validates the definition, mapping, and efficiency from a flexible analysis perspective on widely-used IMDb data set.
\\

\noindent \textbf{Keywords:} Heterogeneous Multilayer Networks; Bipartite Graphs; Community Definition and Detection; Decoupling-Based Composition
\end{abstract}

\section{Motivation}
\label{sec:introduction}
As data sets become more complex in terms of entity and feature types, and large, the approaches needed for their modeling and analysis also warrant extensions or new alternatives to match the data set complexity and scale. With the advent of social networks and large data sets, we have already seen a surge in the use of graph-based modeling along with a renewed interest in concepts, such as community and hubs  needed for their analysis. Even data sets that may not be inherently graph-based (e.g., IMDb data set used in this paper) may benefit from the use of graph representation for modeling (from an understanding perspective) and for performing various kinds of analysis that may be difficult or not possible using the traditional Database Management System (DBMSs) or mining approaches.

Data sets being analyzed may also contain features that are derived in addition to explicit ones. Derived information can also be used as an attribute for analysis purposes. We have applied content extraction using deep neural network model~\cite{CICLing:Vu2019} on Facebook status updates to infer privacy-concern corresponding to personality traits for an individual to analyze feature correlation with other attributes such as age.

Informally, MLNs\footnote{The terminology used for variants of multilayer networks varies drastically in the literature and many a times is not even consistent with one another. For clarification, please refer to~\cite{MultiLayerSurveyKivelaABGMP13} which provides an excellent comparison of terminology used in the literature, their differences, and usages clearly.} are \textit{layers of networks (also called network of networks)} where each layer is a simple graph and captures the semantics of an attribute or feature of an entity type. The layers can also be connected. If each layer of a MLN has the \texttt{same set of entities of the same type}, it is termed a homogeneous MLN (or HoMLN.) For a HoMLN, intra-layer edges are shown explicitly and inter-layer edges are not shown as they are implicit. If \texttt{the set and types of entities are different for each layer}, then relationships of entities across layers are shown using explicit inter-layer edges (in addition to intra-layer edges.) This distinguishes a heterogeneous MLN (or HeMLN) from the previous one. Of course, hybrid MLNs (or HyMLN) are also possible.

\subsection{Structure-Preservation For Semantics}
\label{sec:structure-preservation}
For a simple graph, a community preserves its structure in terms of node/edge labels and relationships. Preserving the structure of a community of a MLN (especially HeMLN) entails preserving not only their multilayer network structure but also node/edge types, labels, and importantly inter-layer relationships. In other words, each community  should be a heterogeneous MLN in its own right!
Current approaches, such as using the multilayer network as a whole~\cite{Wilson:2017:CEM:3122009.3208030}, type-independent~\cite{LayerAggDomenicoNAL14} and projection-based~\cite{Berenstein2016, sun2013mining}, do not accomplish this as they aggregate layers into a single graph in different ways. Importantly, aggregation approaches are likely to result in some information loss~\cite{MultiLayerSurveyKivelaABGMP13} or distortion of properties~\cite{MultiLayerSurveyKivelaABGMP13} or hide the effect of different entity types and/or different intra- or inter-layer relationship combinations as elaborated in~\cite{DeDomenico201318469}. Structure-preservation avoids these and further facilitates drill-down of each community for detailed analysis.

\subsection{Decoupling Approach For Efficiency}
Decoupling as proposed in this paper is the equivalent of ``divide and conquer" for MLNs. Modeling a data set as a MLN \textit{and} computing on the whole MLN has not addressed efficiency and scalability issues. As with divide and conquer, decoupling requires partitioning (which comes from the MLN structure) and a way to compose partial (or intermediate) results. This paper adapts the bipartite graph match as the composition function (referred to as $\Theta$, with extensions) leading to efficient community detection on MLNs. Decoupling approach has also been shown to be beneficial for community detection on HoMLNs~\cite{ICCS/SantraBC17} and other computations, such as hubs, in ~\cite{ICDMW/SantraBC17}.

The contributions of this paper are:
 
\begin{itemize}
\item Definition of structure-preserving k-community for a MLN and specifically for a HeMLN, and a representation (Section~\ref{sec:hemln-community}),
\item Identification of a composition function and formalizing decoupling-based approach for k-community detection with an algorithm (Section~\ref{sec:community-detection}),
\item Guidelines for mapping detailed analysis requirements of the data set to the above definition and result computation (Section~\ref{sec:application-and-analysis}),
\item Identification of useful weight metrics and their analysis for diversity and uniqueness(Section ~\ref{sec:customize-maxflow}), and
\item Experimental analysis using the IMDb data set to establish the validity of the proposed approach for analysis along with its efficiency and versatility (Section~\ref{sec:experiments}.)
\end{itemize}

The rest of the paper is organized as follows. 
Section~\ref{sec:related-work} discusses related work relevant to this paper.
Section~\ref{sec:problem-statement} has needed definitions and a clear problem statement for the problem addressed in this paper.
Section \ref{sec:hemln-community} highlights the need for a structure-preserving community definition and formally defines the notion of a \textit{serial k-community for a HeMLN}, its representation, and the decoupling approach. Section~\ref{sec:community-detection} details the k-community detection algorithm using the decoupling approach using the maximal flow-based bipartite match. It also discusses the representation of a k-community and its benefits.
Section~\ref{sec:application-and-analysis} describes the IMDb data set as well as the analysis requirements. It also shows mapping of an analysis to a k-community specification. Section~\ref{sec:customize-maxflow} discusses the need for weight metrics, defines them formally, and their impact on k-community result. Section~\ref{sec:experiments}  shows experimental analysis of the IMDb data set using the proposed approach and validation of efficiency and importance of structure-preservation on understanding the results. Section~\ref{sec:conclusions} concludes the paper.

\section{Related Work}
\label{sec:related-work}

As the focus of this paper is community definition and its efficient detection in HeMLNs, we present relevant work on single or simple graphs (monoplexes) and MLNs (both homogeneous and heterogeneous.) The advantages of modeling using MLNs are discussed in~\cite{Boccaletti20141, BDA/SantraB17,CommSurveyKimL15, MultiLayerSurveyKivelaABGMP13}.

\textit{Community detection} on a simple graph involves identifying groups of vertices that are more connected to each other than to other vertices in the network. Most of the  work in the literature considers \textbf{single networks or simple graphs} where this objective is translated to optimizing network parameters such as modularity ~\cite{clauset2004} or conductance ~\cite{Leskovec08}. As the combinatorial optimization of community detection is NP-complete ~\cite{Brandes03}, a large number of competitive approximation algorithms have been developed (see reviews in ~\cite{Fortunato2009, Xie2013}.) Algorithms for community detection have been developed for different types of input graphs including directed ~\cite{Yang10, Leicht08},
edge-weighted ~\cite{Berry2011}, and dynamic networks ~\cite{porterchaos13,Bansal2011}.
However, to the best of our knowledge, there is no work on  community detection that include node and edge labels, node weights as well as graphs with self-loops and multiple edges between nodes\footnote{In contrast, subgraph mining~\cite{datamine/KuramochiK05,KDD/HolderCD1994, tkde/DasC18}, querying~\cite{tkde/JayaramKLYE15,dawak/DasGC16}, and search~\cite{bigdataconf/HaoC0HBH15,pods/ShashaWG02} have used graphs with node and/or edge labels including multiple edges between nodes, cycles, and self-loops.}. Even the most popular community detection packages such as Infomap \cite{InfoMap2014} or Louvain~\cite{DBLP:Louvain}, do not take these parameters into consideration. 

Recently, community detection algorithms have been extended to \textbf{HoMLNs}. Algorithms based on matrix factorization \cite{dong2012clustering,qi2012community},
pattern mining \cite{silva2012mining,zeng2006coherent},
cluster expansion philosophy \cite{li2008scalable}, Bayesian probabilistic models \cite{xu2012model}, regression \cite{cai2005mining}, spectral optimization of the modularity function based on the supra-adjacency representation \cite{zhang2017modularity} and a significance based score that quantifies the connectivity of an observed vertex-layer set
through comparison with a fixed degree random graph model \cite{wilson2017community} have been developed. However, all these approaches \textit{analyze a MLN either by aggregating all (or a subset of) layers (straightforward for a HoMLN using Boolean and other operators) or by considering the entire multiplex as a whole}. Recently a decoupling-based approaches for detecting communities~\cite{ICCS/SantraBC17} and centrality~\cite{ICDMW/SantraBC17} in HoMLN have been proposed. 
They reduce the exhaustive analysis complexity from O($2^N$) (for all possible subsets of layers) to linear complexity for \textit{N} layers by composing combinations of them. The goal of this paper is to do the same for HeMLNs.

Majority of the work on analyzing HeMLN (reviewed in \cite{shi2017survey,sun2013mining}) focuses on developing meta-path based techniques for determining the similarity of objects~\cite{wang2016relsim}, classification of objects~\cite{wang2016text}, predicting the missing links~\cite{zhang2015organizational}, ranking/co-ranking~\cite{shi2016constrained} and recommendations~\cite{shi2015semantic}. An important aspect to be noted here is that most of them do not consider the intra-layer relationships and concentrate mainly on the bipartite graph formed by the inter-layer edges. 

The type-independent~\cite{LayerAggDomenicoNAL14} and projection-based~\cite{Berenstein2016, sun2013mining} approaches used for HeMLNs do not preserve the structure or types and labels (of nodes and edges) of the community. The type independent approach collapses  all layers into a single graph keeping \textit{all} nodes and edges (including inter-layer edges) sans their types and labels. Similarly, as the name suggests, the projection-based approach projects the nodes of one layer onto another layer and uses the layer neighbor and inter-layer edges to collapse the two layers into a single graph with a single entity type instead of two.
The presence of different sets of entities in each layer and the presence of intra-layer edges makes structure-preserving definition more challenging for HeMLNs. A few existing works have proposed techniques for generating clusters of entities \cite{sun2009rankclus, sun2009ranking,melamed2014community}, but they have only considered the inter-layer links and not the networks themselves. Thus, the \textit{combined effect of layer communities, entity types, intra- and inter-layer relationships (types) have not been included in defining a community in a HeMLN}.  This paper hopes to fill that gap.

\section{Definitions}
\label{sec:problem-statement}

A {\bf graph} $G$ is an ordered pair $(V, E)$, where $V$ is a set of vertices and $E$ is a set of edges. An edge $(v,u)$ is a 2-element subset of the set $V$. The two vertices that form an edge are said to be adjacent or neighbors of each other. In this paper we only consider graphs that are undirected (the  vertices in the edge are unordered) and simple (there are no self-loops or multiple edges.) Although this type of graph does not include labels, we note that our representation and analysis can be applied to labeled graphs.

A {\bf multilayer network}, $MLN (G, X)$, is defined by two sets of single networks. The set $G = \{G_1, G_2, \ldots, G_N\}$ contains general graphs for N layers as defined above, where $G_i (V_i, E_i)$ is defined by a set of vertices, $V_i$ and a set of edges, $E_i$. An edge $e(v,u) \in E_i$, connects vertices $v$ and $u$, where $v,u\in V_i$. The set $X =\{X_{1,2}, X_{1,3}, \ldots, X_{N-1,N}\}$ consists of bipartite graphs. Each graph $X_{i,j} (V_i, V_j, L_{i,j})$ is defined by two sets of vertices $V_i$ and $V_j$, and a set of edges (or links) $L_{i,j}$, such that for every link $l(a,b) \in L_{i,j}$,  $a\in V_i$ and $b \in V_j$, where $V_i$ ($V_j$) is the vertex set of graph $G_i$ ($G_j$.)

{\em To summarize}, the set $G$, represents the constituent graphs of layers of a MLN and the set $X$ represents the bipartite graphs that correspond to the edges \textit{between} pairs of layers (or inter-layer edges.) The two sets of nodes for any bipartite graph from $X$ come from two layers and if labels are assigned, they would have different node labels. For a Heterogeneous MLN (HeMLN), $X$ is explicitly specified.
Without loss of generality, we assume unique numbers for nodes across layers and disjoint sets of nodes across layers\footnote{Heterogeneous MLNs are also defined with overlapping nodes across layers (see \cite{MultiLayerSurveyKivelaABGMP13}.) We do not consider that in this paper.}.

We  propose a {\bf decoupling approach for community detection}. Our algorithm is defined for combining communities from two layers of a HeMLN (using a composition function) which is extended to $k$ layers (by applying pair-wise composition repeatedly.) We define a \textit{serial k-community} to be a multilayer community where communities from $k$ distinct connected layers of a HeMLN are combined in a specified order.

\begin{figure}[h]
   \centering \includegraphics[width=\linewidth]{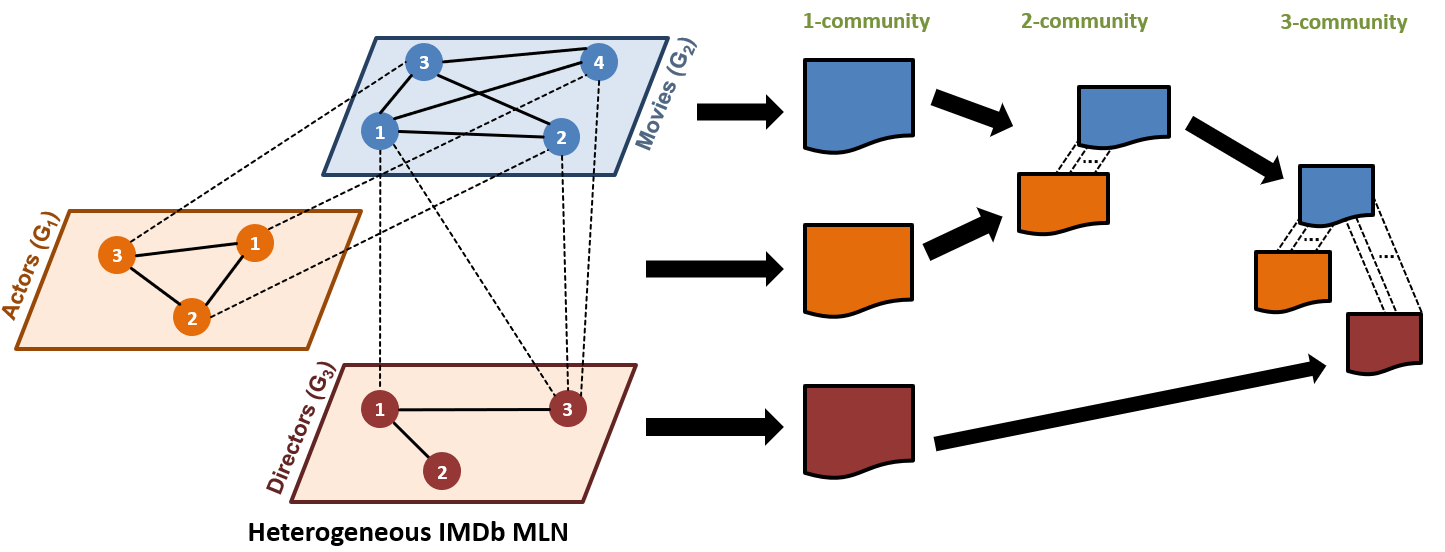}
    \caption{Illustration of decoupling approach for computing a 3-community (($G_2$ $\Theta_{2,1}$ $G_1$)  $\Theta_{2,3}~G_3$); $\omega_e$\protect\footnotemark}
   \label{fig:decoupling}
\end{figure}

\footnotetext{Technically, this should be expressed as (($\Psi$($G_2$) $\Theta_{2,1}$ $\Psi$($G_1$))  $\Theta_{2,3}$ $\Psi$($G_3$).) However, we drop $\Psi$ for simplicity. In fact, $\Theta$ with its subscripts is sufficient for our purpose due to pre-defined precedence (left-to-right) of $\Theta$. We retain G for clarity of the expression. $\omega_e$ is a weight metric discussed in Section~\ref{sec:customize-maxflow}.}

Network decoupling approach for finding communities in MLNs is as follows;

 {\bf(i)} First use the function $\Psi$ (here community detection) to find communities in each of the layers (networks) individually and consider each community to be a meta node,

 {\bf(ii)} for any two chosen layers, construct a bipartite graph using their communities as meta nodes and creating meta edges that connect the meta nodes (using an element of X) and assign weights ($\omega$), and

 {\bf(iii)} finally combine the partial results from each layer (represented as meta nodes of the bipartite graph above), using a function $\Theta$ (here \textit{maximal network flow matching}.)

 Although in this paper we focus on community detection, the functions $\Psi$, $\omega$ and $\Theta$ can be suitably modified for other network analysis, such as centrality detection, as well.

 Figure~\ref{fig:decoupling} illustrates the decoupling approach for specifying and computing a community of a larger size from partial results. We illustrate how a set consisting of distinct communities from a layer is used for computing a 2-community (for 2 layers) and further a 3-community (for 3 layers) using partial results. 1-community is the set of communities generated for a layer (simple graph.) As shown in the figure, 1-community from individual layers are used to compose progressively larger k-community.

We can thus succinctly define our problem statement as follows;
\textit{For a given data set with $F$ different features and $T$ entity types and a set of analysis objectives $A$, create an appropriate HeMLN and determine the appropriate triad of $\Psi$,  $\Theta$, and $\omega$, and the order in which to combine the layers to obtain the serial $k$-community for \textit{each objective}, where $k \le F$.} 
\section{What is a Community in a H\MakeLowercase{e}MLN?}
\label{sec:hemln-community}

As discussed in Section~\ref{sec:related-work}, existing approaches for detecting communities for HeMLNs include either i) combining all (or a subset of) layers and applying single graph community detection algorithms or ii) projecting a layer on to a connected layer as a means of aggregating the two layers and applying single graph community detection. Both ignore the different types of entities, and are likely to result in some information loss, distortion of properties or hide the effect of different intra- or inter-layer relationships~\cite{MultiLayerSurveyKivelaABGMP13,DeDomenico201318469}. However, these approaches are used because they can leverage the availability of algorithms available for simple graphs.

\begin{table}[h]
    \renewcommand{\arraystretch}{1.3}
    \centering
    \begin{tabular}{|c||c|}
            \hline
            $G_i (V_i, E_i)$ & Simple Graph for layer \textit{i} \\
            \hline
            $X_{i,j} (V_i,V_j,L_{i,j})$ & Bipartite graph of layers \textit{i} and \textit{j} \\
            \hline
            $MLN (G, X)$ & Multilayer Network of layer graphs (set \textit{G}) and Bipartite graphs (set \textit{X})\\
            \hline
            $\Psi$ & Analysis function for $G_i$ (community)\\
            \hline
            $\Theta_{i,j}$ & Bipartite graph match function \\
            \hline
            $CBG_{i,j}$ & Community bipartite graph for $G_i$ and $G_j$ \\
            \hline
            $U_i$ & Meta nodes for layer \textit{i} 1-community \\
            \hline
            $L'_{i,j}$ & Meta edges between $U_i$ and $U_j$ \\
            \hline
            $c_i^{m}$ & $m^{th}$ community of $G_i$ \\
            \hline
            $v_i^{c^{m}}$, $e_i^{c^{m}}$ & Vertices and Edges in community $c_i^{m}$\\
            \hline
            $H_i^m$ & Hubs in $c_i^{m}$ \\
            \hline
            $H_{i,j}^{m,n}$ & Hubs in $c_i^{m}$ connected to $c_j^{n}$\\
            \hline
            $x_{i,j}$ & \{Expanded(meta edge $<c_i^{m}$, ~$c_j^{n}>$\}\\
            \hline
            $0$ and $\phi$ & null community id and empty $x_{i,j}$\\
            \hline
            $\omega_e$, $\omega_d$, $\omega_h$ & Weight metrics for meta edges\\
            \hline
        \end{tabular}
            \caption{List of Notations used in this paper}
    \label{table:notations}

\end{table}

We now present the case for creating structure preserving communities.

As an example, for the HeMLM shown in Figure~\ref{fig:decoupling} for the IMDb data set, consider the analysis \textit{``Find group of directors who direct actors who act together?} Note that the actor and director layers can only compute groups of directors (who direct similar genre) and groups of actors (who act together.) The connection between directors and actors only come from inter-layer edges. It is only by preserving the structure of both the communities in actor and director as well as the inter-layer edges, can we answer this question.

Clearly, multiple relationships can exist in such a collection of layers, such as co-acting, similar genres and who-directs-whom. An analysis requirement may also want to use \textit{preferences} for community interactions. As an example, one may be interested in groups (or communities) where the \textit{most important} actors and directors interact.

The community definition and detection research in the literature for homogeneous MLNs~\cite{BDA/SantraB17,ICCS/SantraBC17} are not applicable to HeMLN as each layer has \textit{different sets and types of entities} with  \textit{inter-layer edges} between them.
We propose a general intuitive community definition for HeMLN consisting of any number of layers and arbitrary inter-layer connections.
Importantly, \textit{this definition of community for a HeMLN needs to be structure-preserving and computationally efficient}.

Table~\ref{table:notations} lists all notations used in the paper and their meaning for quick reference

\subsection{Formal Definition of Community in a MLN\protect\footnote{All the definitions given below are applicable to both a HoMLN and a HeMLN, as well as for disjoint or overlapping communities. However, we are focusing only on disjoint communities in HeMLNs in this paper.}}
\label{sec:definition}

A \textbf{\textit{1-community}} is a set of communities of the simple graph corresponding to  a layer.

A \textbf{community bipartite graph\footnote{We defined the set X of bipartite graphs between layers of HeMLN in Section~\ref{sec:problem-statement}. This is a different bipartite graph between two sets whose nodes (termed meta nodes) correspond to communities from two distinct layers. A single bipartite edge (termed meta edge) is drawn between distinct meta node pairs as defined.}}
\textbf{\textit{CBG$_{i,j}$($U_i$, $U_j$, $L'_{i,j}$)}}
is defined between two disjoint and independent sets $U_i$ and $U_j$. An element of $U_i$ ($U_j$) is a community id from $G_i$ ($G_j$) and is represented as a single meta node in $U_i$ ($U_j$.) $L'_{i,j}$ is the set of meta edges between the nodes of $U_i$ and $U_j$ (or bipartite graph edges.) For any two meta nodes, a \textit{single edge} is included in $L'_{i,j}$, if there is \textit{an inter-layer edge} between any pair of nodes from the corresponding communities (acting as meta nodes in CBG) in layers $G_i$ and $G_j$. Note that there may be many inter-layer edges between the communities from the two layers. Also note that $U_i$ ($U_j$) need not include all community ids of $G_i$ ($G_j$.) The strength (or weight) component of the meta edges is elaborated in Section \ref{sec:customize-maxflow}.

A \textbf{\textit{serial 2-community}} is defined on the community bipartite graph \textit{CBG$_{i,j}$($U_i$, $U_j$, $L'_{i,j}$)} corresponding to layers $G_i$ and $G_j$\footnote{All 1-community elements from $G_i$ ($G_j$) need not be in $U_i$ ($U_j$.)}. A 2-community is a set of tuples each with a pair of elements $<c_i^m, c_j^n>$, where $c_i^m \in U_i$ and $c_j^n \in U_j$, that satisfy the \textit{maximal network flow property} (composition function $\Theta$ defined in Section~\ref{sec:problem-statement}) for the bipartite graph of $U_i$ and $U_j$, along with the set of inter-layer edges between them.
The number of tuples in the 2-community is bound by the $Min(|U_i|, |U_j|)$.

A \textbf{\textit{serial k-community}}\footnote{We have used the \textbf{``serial"} prefix for defining a k-community to emphasize the order used (but can be arbitrary) in its specification as a k-community corresponds to a connected subgraph of k layers. Our definition assumes left-to-right precedence for the composition function $\Theta$. It is possible to define a k-community with explicit precedence specification for $\Theta$.
Also, other definitions are possible that may be order agnostic. Finally, we drop the repetitive ``serial" prefix henceforth  as we only refer to a serial k-community in the rest of the paper.}  for \textit{k} layers of a HeMLN is defined as applying the \textit{serial 2-community} definition recursively to compose a k-community.
The base case corresponds to applying the definition of 2-community for some two layers. The recursive case corresponds to applying  2-community composition for a  t-community.

For each recursive  step, there are two cases: i)  for the 2-community under consideration, the $U_i$  is from a layer $G_i$ \textit{already in the t-community} and the $U_j$  is from a \textit{new layer $G_j$}. This bipartite graph match is said to \textbf{extend} a t-community (t $<$ k) to a \textit{(t+1)-community},
or ii) for the 2-community under consideration,  \textbf{both} $U_i$ ($U_j$)  from layers $G_i$ ($G_J$) are \textit{already in the t-community}.
This bipartite graph match is said to \textbf{update} a t-community (t $<$ k), not extend it.

In both cases i) and ii) above, a number of outcomes are possible. Either a meta node from $U_i$ a) matches a meta node in $U_j$ resulting in a \textbf{consistent match}, or b) does not match a meta node in $U_j$ resulting in a \textbf{no match}, or c) matches a node in $U_j$ that is not consistent with a previous match termed \textbf{inconsistent match}.

Structure preservation is accomplished by retaining for each tuple of t-community, as appropriate, either a matching community id (or 0 for null id) and/or
$x_{i,j}$ (or $\phi$ for empty set) representing inter-layer edges corresponding to the meta edge between the meta nodes (termed \textbf{expanded (meta edge).)} The \textit{extend} and \textit{update} carried out for each of the outcomes on the representation is listed in Table~\ref{table:algo-refer}.

\begin{table}[h!t]
\centering
    \begin{tabular}{|l|c|}
            \hline
                \textbf{($G_{left}$, $G_{right}$) outcome} & \textbf{Effect on tuple \textit{t} } \\
            \hline
            \hline
            \multicolumn{2}{|c|}{\texttt{\textcolor{blue}{case (i) - one processed and one new layer}}} \\
            \hline
            a) \texttt{\textcolor{blue}{consistent match}} & \textbf{Extend} \textit{t} with paired community id \textbf{and} $x_{i,j}$\\
            \hline
            b) \texttt{\textcolor{blue}{no match}} & \textbf{Extend} \textit{t} with 0 and $\phi$\\
            \hline
            \hline
            \multicolumn{2}{|c|}{\texttt{\textcolor{blue}{case (ii) - both are processed layers}}} \\
            \hline
            a) \texttt{\textcolor{blue}{consistent match}} & \textbf{Update} \textit{t} \textbf{only with} $x$\\
            \hline
            b) \texttt{\textcolor{blue}{no match}} & \textbf{Update} \textit{t} \textbf{only with} $\phi$\\
            \hline
            c) \texttt{\textcolor{blue}{inconsistent match}} & \textbf{Update} \textit{t} \textbf{only with} $\phi$\\
            \hline

        \end{tabular}
            \caption{Possible cases and outcomes for a maximal network flow match (See Algorithm \ref{alg:k-community})}
    \label{table:algo-refer}
\end{table}

Note that when a k-community is defined on a connected subgraph \textit{sg} (consisting of k layers of a HeMLN as \textit{single nodes} and each bipartite edge set as a \textit{single edge}), for any recursive step, if \textbf{only} case i) is applicable, then it corresponds to a k-community of \textit{a} spanning tree of \textit{sg}. On the other hand, use of case ii) corresponds to a k-community definition in which one or more cycle edges  of \textit{sg} participate. For both cases, maximal network flow outcome may result in 3 outcomes:   a \textit{consistent match}, \textit{no match}, or an \textit{inconsistent match} as shown in Table~\ref{table:algo-refer}.

\subsection{Characteristics of k-community}

The above definition when applied to a specification (such as the one shown in Figure~\ref{fig:sequence}) generates \textit{progressively strong coupling between layers} (due to left-to-right precedence of $\Theta$) using maximal network flow. \textit{Thus, our definition of a k-community is characterized by dense connectivity within the layer (community definition) and strong coupling across layers (maximal network flow definition.)} Note that the traditional methods of finding communities in HeMLN, such as aggregation or projection can be expressed as special cases of our structure-preserving definition.

 \begin{figure}[ht]
   \centering \includegraphics[width=\linewidth]{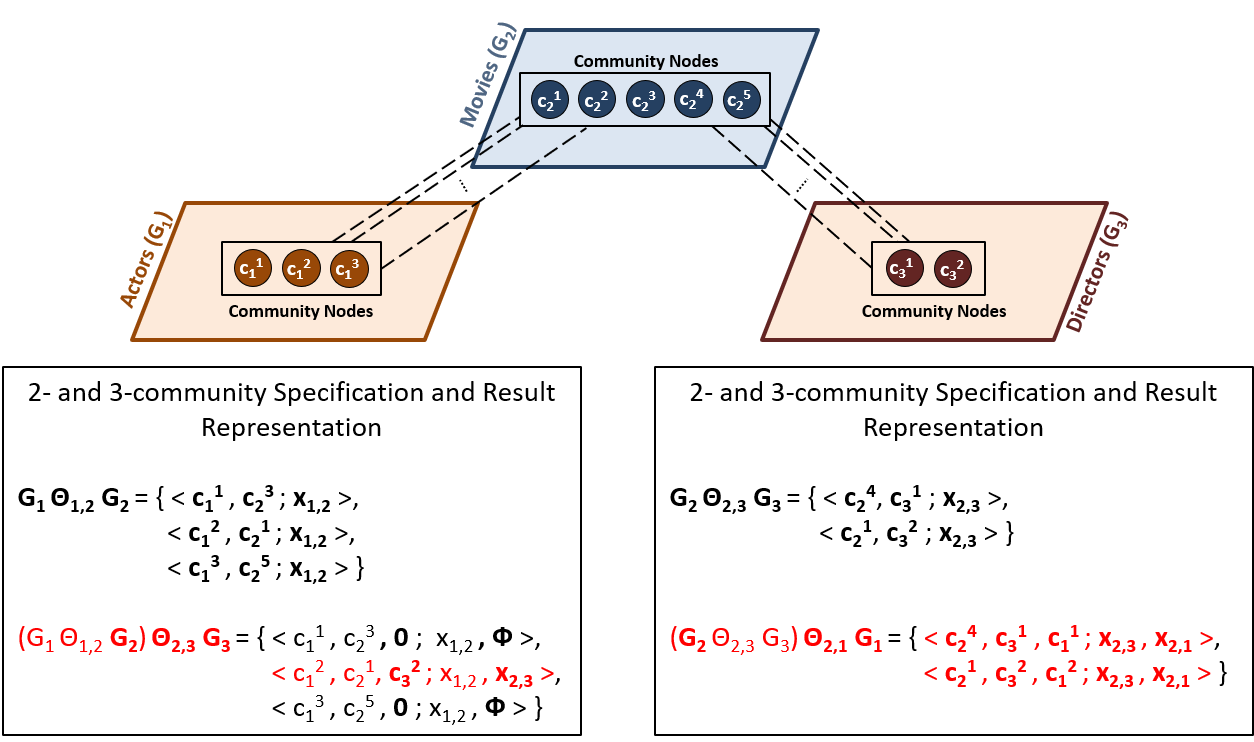}
   \caption{{Illustration exemplifying order dependence on a k-community}}
   \label{fig:sequence}
\vspace{-15pt}
\end{figure}

{\bf Space of Analysis Alternatives: } Given a connected HeMLN of k layers, the number of possible k-community is quite large. If we consider the HeMLN layers to be single nodes and bipartite graph edge sets to be single edges, then it is a function of the number of unique connected subgraphs of different sizes and the number of possible orderings for each such connected subgraph (including cycles.) With the inclusion of 3 weight metrics (see below), it gets even larger.
It is important to understand that each subgraph of a given size (equal to the number of edges in the connected subgraph) along with the ordering represents a \textit{different} analysis of the data set and provides a different perspective thereby supporting a large space of analysis alternatives. Finally, the composition function $\Theta$ defined above (maximal network flow of a bipartite graph) is commutative  follows from the maximal network flow), but not associative\footnote{This is due to the use of a subset of meta nodes rather than the entire 1-community during any recursive step.}. Hence, for each k-community, the \textit{order in which a k-community} is defined has a bearing on the result (semantics) obtained for the k-community. In fact, the ordering is important as it differentiates one analysis from the other even for the same set of layers  and inter-layer connections as elaborated in Section~\ref{sec:application-and-analysis}. Figure~\ref{fig:sequence} shows clearly two 3-community for the same layers which are quite different! For the Figure~\ref{fig:sequence}, assuming inter-layer edges between layers A and D (forming a cycle), possible 1-community specifications are 3, possible  2-community specifications are 3, the number of acyclic 3-community specifications are 6, and the number of cyclic 3-community specifications are 6 for a total of 18 possible analysis.

{\bf Bounds on the size of k-community.} The number of elements in the  k-community  is bounded by the number of consistent matches (or pairs) obtained for the base case.
Bipartite match in every recursive step is bound by the smaller of the two sets $U_i$ and $U_j$. Also, each element of a k-community set can be total or partial. A \textbf{partial k-community element} has \textit{\textbf{at least one} $\phi$} as part of the tuple. Otherwise, it is a \textbf{total k-community element}.
If a cyclic k-community shows a \textbf{total}  element, it reflects a much stronger coupling as it includes all inter-layer edges for those communities in the cycle (as is the case of M2-A2-D4-M4 in Figure~\ref{fig:A6-madm}.) A \textbf{partial} k-community  element, on the other hand, for both acyclic and cyclic cases indicates strong coupling only among \textit{a subset of layers}.

{\bf Need For Edge Weights:} Finally, in a traditional bipartite graph (used for dating, hiring etc.), each node is a simple node. In our case, each meta node is a community representing a group of entities with additional characteristics. Hence, the weights associated with the edges of the community bipartite graph need to reflect the participating community characteristics which has a significant bearing on the strength of the coupling. We discuss a number of alternatives for weights (termed weight metrics $\omega$) in Section~\ref{sec:customize-maxflow}, derived from real-world scenarios.

The above definition can be used for both disjoint or overlapping 1-community. We believe that this definition can be applied to a homogeneous MLN as well by defining the weight metrics appropriately as there are significant differences in the inter-layer edge characteristics and node types. \textit{Most importantly, unlike current alternatives in the literature for community of a MLN, there is no information loss or distortion or hiding the effect of different entity types or relationships in our definition. } 
\section{\MakeLowercase{k}-Community Detection}
\label{sec:community-detection}

In this section, we first present a linear specification of a k-community and also a linear representation for the result. We indicate how the result representation preserves the structure along with all related information. Then we present an algorithm using the weight metric given for maximal network flow. We discuss a number of meaningful ways in which we can enhance the strength of the coupling for maximal network flow approach by providing alternative weight metrics based on participating community graph characteristics.
This allows us to adapt an available algorithm to factor in the semantics of the analysis.

\subsection{k-Community Representation}
\label{sec:rep}

Note that a k-community is represented as a set of tuples. Each tuple represents a distinct element of a k-community and includes an ordering of k community ids as items  and at least an ordered (k-1)  expanded(meta edge) (i.e., $x_{i,j}$.) This representation completely preserves the MLN structure as well as the ability to reconstruct node/edge types and labels.

Linearization of a HeMLN structure is done using an order of specification which is also used for computation.  Although a k-community need to be specified as an expression involving $\Psi$ and $\Theta$, as indicated earlier, we drop $\Psi$ for clarity. For the layers shown in Figure~\ref{fig:sequence}, a 2-community specification could be ($G_1 ~\Theta_{1,2}~ G_2$). A 3-community computation shown is for the specification (($G_2$ $\Theta_{2,3}$ $G_3$)  $\Theta_{2,1}$ $G_1)$. We can drop the parentheses as the precedence of $\Theta$ is assumed. However, we need the subscripts for $\Theta$ to disambiguate a k-community specification  when a composition is done on the layers already used. 
If the layers $G_1 ~ and ~G_3$ were also connected with inter-layer edges in Figure~\ref{fig:sequence}, a 3-community involving a cycle can be specified as $G_1$ $\Theta_{1,2}$ $G_2$  $\Theta_{2,3}$ $G_3 ~\Theta_{3,1}~ G_1$.  

The result of a k-community is represented as a set of tuples where each tuple corresponds to a distinct element of k-community. Each tuple of a k-community has 2 distinct components. The first component of each tuple is an ordering of community ids (as items) from a distinct layer in the k-community specification, separated by a comma. A semicolon separates the second component from the first. The second component of each tuple is again a comma-separated ordering of (k-1) items which are $x_{i,j}$ (with each x having a different pair of subscripts.)  Communities for $x$ are uniquely identifiable from the subscripts. Figure~\ref{fig:sequence} shows a number of 2- and 3-community results for the corresponding specifications.
For the acyclic 3-community specification $G_1$ $\Theta_{1,2}$ $G_2$ $\Theta_{2,3}$ $G_3$, the element $<$ $c_1^2$, $c_2^1$, $c_3^2$ \textbf{;} $x_{1,2}$, $x_{2,3}$ $>$ is the only total element as it does not include any $\phi$, whereas, the other two elements, $<$ $c_1^1$, $c_2^3$, 0 \textbf{;} $x_{1,2}$, $\phi$ $>$ and $<$ $c_1^3$, $c_2^5$, 0 \textbf{;} $x_{1,2}$, $\phi$ $>$, are partial as both include one $\phi$. 
If $G_1$ and $G_3$ were connected, then for the 3-community specification $G_1$ $\Theta_{1,2}$ $G_2$ $\Theta_{2,3}$ $G_3$ $\Theta_{3,1}$ $G_1$ (involving a cycle), the result shown in figure changes to \{
 $<$ $c_1^1$, $c_2^3$, 0 \textbf{;} $x_{1,2}$, $\phi$ $>$, $<$ $c_1^2$, $c_2^1$, $c_3^2$ \textbf{;} $x_{1,2}$, $x_{2,3}$, $x_{3,1}$ $>$, $<$ $c_1^3$, $c_2^5$, 0 \textbf{;} $x_{1,2}$, $\phi$ $>$
\}. Note that the number of communities in each tuple is k (3 here) and the number of inter-layer edge sets is at least (k-1). It is exactly (k-1) if the k-community is for a spanning tree (i.e., an acyclic connected graph) and more depending upon the number of cyclic edges in specification (here it is 3 as one cyclic edge is included.) To generalize, an element of \textit{\textbf{k-community}} for an arbitrary specification \\
$G_{n1}$ $\Theta_{n1,n2}$ $G_{n2}$ $\Theta_{n2,n3}$ $G_{3}$ ... $\Theta_{n1,nk}$ $G_{k}$ \\ will be represented as \\ $<$ $c_{n1}^{m1}$, $c_{n2}^{m2}$, ..., $c_{nk}^{mk}$ \textbf{;} $x_{n1,n2}$, $x_{n2,n3}$, ..., $x_{ni,nk}$ $>$, where some c's may be 0 and some x's may be $\phi$.

\subsubsection{Structure-Preserving Representation Benefits}

The representation used has a number of benefits from an analysis perspective.

\begin{enumerate}
    \item Each element of s k-community can be further analyzed individually as the tuple contains all the information to reconstruct the HeMLN and drill down for details. Each community can also be displayed.
    \item Total and partial elements of k-community, each provide important information about the result characteristics.  A partial one shows a weak coupling of the complete community whereas a total element indicates strong coupling. This can be inferred by processing the result set.
    \item The resulting set can be ranked in several ways based on community and other characteristics. For example, they can be ranked based on community size or density (or any other feature) as well as significance of the layer.
    \item Finally, results from previous specifications can be re-used if the same sub-expression needs to be evaluated again.
    
\end{enumerate}

\subsection{k-Community Detection Algorithm}
\label{sec:flow}

\begin{algorithm}[h]
\caption{k-community Detection Algorithm}
\label{alg:k-community}
\begin{algorithmic}[1]
   \REQUIRE- \\
   \textbf{INPUT:}  HeMLN, ($G_{n1}$ $\Theta_{n1,n2}$ $G_{n2}$ ... $\Theta_{ni,nk}$ $G_{nk}$), and a weight metric (wm). \\
   \textbf{OUTPUT:} Set of tuples with two components \\
    \texttt{//input processed from left to right} \\
    \STATE \textbf{Initialize:} k = 2, $U_i$ = $\phi$, $U_j$ = $\phi$, $L$ = $\phi$\\  
    \textit{result} $\gets$ initialize(2-community($G_{n1}$,$G_{n2}$),HeMLN,wm) \\
    \textit{left} $\gets$ next\_left\_subscript ($\Theta$) \\
    \textit{right} $\gets$ next\_right\_subscript($\Theta$) \\
    \WHILE{\textit{left} $\neq$ null \&\& \textit{right} $\neq$ null}
        \STATE $U_i$ $\gets$ subset of 1-community($G_{left}$) \\
        \STATE $U_j$ $\gets$ subset of     1-community($G_{right}$) \\ 
        \texttt{// subsets (3,4) if layer has been processed} \\
    
        \STATE \textit{MP} $\gets$ max\_flow\_pairs($U_i$, $U_j$, HeMLN, wm) \\
        \texttt{// a set of pairs $<c_{left}^p$, $c_{right}^q>$} \\
        \FOR{ \textbf{each} tuple \textit{t} $\in$ \textit{result} }
            \IF {\textit{both $c_{left}^x ~and~ c_{right}^y$} are part of \textit{t} and $\in$ MP \texttt{\textcolor{blue}{[case ii (processed layer): consistent match]}}}
                \STATE Update \textit{t} with  ($x_{left, right}$) 
            \ELSIF {$c_{left}^x$ is  part of \textit{t} and $\in$ MP and $G_{right}$ layer has been processed \texttt{\textcolor{blue}{[case ii (processed layer): no and inconsistent match]}} }
                \STATE Update \textit{t} with $\phi$ \\
            \ELSIF {$c_{left}^x$ is part of \textit{t} and $\in$ MP  \texttt{\textcolor{blue}{[case i (new layer): consistent match]}}}
                \STATE Extend \textit{t} with paired $c_{right}^y$ $\in$ MP and $x_{left, right}$   \\
                \STATE k = k + 1
            \ELSIF {$c_{left}^x$ is part of \textit{t} and $\notin$ MP \texttt{\textcolor{blue}{[case i (new layer): no match]}}}
                \STATE Extend \textit{t} with 0 (community id) and $\phi$ \\
                \STATE k = k + 1
            \ENDIF
        \ENDFOR \\
        \textit{left} $\gets$ next\_left\_subscript ($\Theta$) or null \\
        \textit{right} $\gets$ next\_right\_subscript($\Theta$) or null
        \ENDWHILE
     \end{algorithmic}
\end{algorithm}

Algorithm~\ref{alg:k-community} is an iterative algorithm that accepts a linearized specification of a k-community and computes the result as described earlier. The input is an \textit{ordering of layers}, \textit{composition function indicating the community bipartite graphs to be used} and the type of weight to be used (elaborated below.) 
The output is a \textit{set} whose \textit{elements are tuples corresponding to  distinct, single elements of k-community} for that specification. The size (i.e., number of tuples) of this set is bound by the base case. The layers for any 2-community bipartite graph composition are identifiable from the input specification. 

The algorithm iterates until there are no more compositions to be applied. The number of iterations is equal to the number of $\Theta$ in the input (including $\Theta$ for the base case.)  For each layer, we assume that its 1-community has been computed.

The bipartite graph for the base case and for each iteration is constructed for the participating layers (either one is new or both are from the t-community for some t) and maximal network flow algorithm is applied. The result is used to either extend or update the tuples of the t-community for all the cases as described in Table~\ref{table:algo-refer}.
Note that the k-community size\textbf{ k is incremented} only when a \textit{new layer is composed (case i).)} For case ii) (cyclic k-community) \textbf{k is not incremented} when \textit{both layers are part of the t-community}.
 When the algorithm terminates, we will have the set of tuples each corresponding to a single, distinct element of k-community for the given specification.

It is not necessary to use the \textit{same weight metric} for all bipartite graph matches while computing a k-community. It is possible to choose different metrics for each bipartite match to adhere to the analysis semantics. Algorithm~\ref{alg:k-community}, as given, does not support this.

Figure~\ref{fig:sequence} illustrates examples of 2- and 3-community computed using the above algorithm (both specification and results.)  The results shown using the representation. The figure also shows the two components of each tuple comprising of community id as items in the first component and the set of expanded(meta edge) or $x$ as items for the second component. Further, it shows how the result of a 2-community is extended to form a 3-community.  It also demonstrates the importance of order in which a k-community is defined. Further, it shows that the same result need be not obtained for different ordering (e.g., 3-community) \textit{on the same layers}.  

Although this paper only discusses the serial k-community, we want to emphasize that the proposed approach -- both specification and computation -- can be easily extended and generalized to define variants of k-community. For example, by indicating  precedence for applying 2-community bipartite match, a non-serial k-community can be easily defined and computed by the above algorithm with minor changes. For example, the k-community specification 
$(((G_3$ $\Theta_{3,2}$ $G_2$)  $\Theta_{2,1}$ $G_1) ~\Theta_{2,3}~ G_3)$ could be specified alternatively as a \textit{non-serial k-community} as
$((G_3$ $\Theta_{3,2}$ $G_2$)  $\Theta_{3,2}$ ($G_1 ~\Theta_{1,2}~ G_2))$ which has a totally different semantics from an analysis perspective. 

Finally, the algorithm uses maximal network flow algorithm with weights for meta edges that are customized for any MLN k-community. From a  broader analysis perspective, we have identified several important characteristics to formulate edge weights. We discuss only three of them  in this paper (in Section~\ref{sec:customize-maxflow}) due to space constraints, which are (all between participating communities in the community bipartite graph): number of edges (denoted by $\omega_e$), density (denoted by $\omega_d$), and hub participation (denoted by $\omega_h$.)
\section{Analysis of IMD\MakeLowercase{b} Data Set}
\label{sec:application-and-analysis}

In order to demonstrate the need and utility of structure-preserving k-community and identify a set of metrics that are domain-independent and applicable for specific analysis needs, we introduce the data set and analysis requirements. This will clarify the usage of the three weight metrics for maximal network flow algorithm. The same data set is used for experimental analysis.\\

\noindent \textbf{IMDb data set: }The IMDb data set captures movies, TV episodes, actor, directors and other related information, such as rating. This is a large data set consisting of movie and TV episode data from their beginnings. This data set can be modeled and analyzed in multiple ways.
This data set has been chosen for its versatility in that it can be modeled using HoMLN as well as HeMLN based on analysis requirements.\\

\noindent \textbf{General Analysis Requirements: }\textit{The analysis goal for this data set is to analyze the three important groups (entity types)  in this data set: actors, directors, and movies based on the information available (attributes.) Given three entity types, we are interested in analyzing directs-actor relationship, acts-in-a-movie with a specific-rating relationship, and directs-movie with a specific review rating relationship.}
Below, the above analysis is expressed as detailed requirements to facilitate mapping to a k-community. We also indicate the specification for each detailed analysis including the order of composition followed by metric to be used. The rationale for these mappings are discussed in the remainder of the paper. \\

\noindent \textbf{Detailed Analysis Requirements: }

\begin{enumerate}[label={A(\arabic*)}]
    \item Find co-actor groups that have \textit{maximum interaction} with  director groups who have directed similar genres?

    2-community: A $\Theta_{A,D}$ D; $\omega_e$; order does not matter.
    \label{list:maximum-interaction}

    \item Identify the \textit{strongly connected} co-actor groups where \textit{most of the actors have worked with most of the directors} who have directed similar genres
    
    2-community: A $\Theta_{A,D}$ D; $\omega_d$; order does not matter.
    \label{list:density}
    
    \item Identify \textit{versatile} director groups who work with \textit{most sought after} actors among co-actors

    2-community: A $\Theta_{A,D}$ D; $\omega_h$; order does not matter.
    \label{list:hub-participation}

    \item For the group of directors (who direct similar genres) having \textit{maximum interaction} with members of co-actor groups, identify the \textit{most popular rating} for the movies they direct? 

    Acyclic 3-community: A $\Theta_{A,D}$ D $\Theta_{D,M}$ M; $\omega_e$
    \label{list:We-acyclic3-adm}

    \item For the \textit{most popular} actor groups from each movie rating class, which are the director groups with which they have \textit{maximum interaction}?

    Acyclic 3-community: M $\Theta_{M,A}$ A $\Theta_{A,D}$ D; $\omega_e$
    \label{list:We-acyclic3-mad}

    \item Find the co-actor groups with \textit{strong movie ratings} that have \textit{high interaction} with those director groups who also make movies with similar ratings (as that of co-actors.) 
    
    Cyclic 3-community: M $\Theta_{M,A}$ A $\Theta_{A,D}$ D $\Theta_{D,M}$ M;  $\omega_e$
    \label{list:We-cyclic3-madm}

\end{enumerate}

Based on the above analysis requirements, the layers for the IMDb data set are formed as follows. The first layer (Layer A) has actors as nodes (actor entity type with co-actor intra-layer relationship) which are connected  if  \textit{two actors have acted together in at least one movie.} The second layer (Layer D) has director nodes (director entity type with similar genre relationship for intra-layer edges) which are linked if they have \textit{directed movies with overlapping  genres (at least 50\%.)}  In the Movie layer (Layer M), the movie nodes are connected \textit{based on the range of average ratings in a specific range} (movie entity type with rating relationship.) The ratings are divided into 5 disjoint ranges - [0-2), [2-4), [4-6), [6-8) and [8-10]. The movies with missing rating information were not connected. 

The inter-layer edges have the following semantics. The directs-actor inter-layer edges ($L_{A,D}$) connect a director with an actor if he has directed that actor in a movie. The  directs-movie inter-layer relationship ($L_{D,M}$) connects a director to his movies in the movie-ratings layer. Finally, acts-in-a-movie relationship is captured by the inter-layer edges ($L_{A,M}$) which connects an actor with the movies the actor has acted in. This exercise also highlights the modeling using a HeMLN for the set of analysis requirements. We have purposely chosen a cyclic HeMLN for illustrative purposes. Figure~\ref{fig:sequence} shows these layers, two inter-layer connections along with some communities in each layer.

For a specific analysis, the characteristics of the communities connected in the bipartite graph need to be used as meta edge weight to get desired coupling. For example, \textit{maximum interaction} and \textit{most popular} in ~\ref{list:maximum-interaction}, \ref{list:We-acyclic3-adm}, \ref{list:We-acyclic3-mad} and \ref{list:We-cyclic3-madm}, is interpreted as the number of edges between the participating communities. In contrast, strongly connected groups interacting  with other strongly connected groups as in~\ref{list:density}, is interpreted to include community density as well. Versatility or most sought after is mapped to participation of hub nodes in each group as in  ~\ref{list:hub-participation}.

The purpose of a k-community is for analyzing arbitrary number of layers of a HeMLN. ~\ref{list:We-acyclic3-adm} requires a 3-community (for 3 layers) with an acyclic specification (using only 2 edges.) The 3-community analysis description of ~\ref{list:We-acyclic3-adm} also includes the order from the specification. The same is true for ~\ref{list:We-acyclic3-mad}, but the order starts with the Movie layer as the analysis is from that perspective. Finally, ~\ref{list:We-cyclic3-madm} requires a cyclic 3-community using inter-layer relationships between all layers in a particular order.
\section{Customizing Maximal Flow}
\label{sec:customize-maxflow}

Algorithm~\ref{alg:k-community} in Section~\ref{sec:community-detection} uses a bipartite graph match with a given weight metric. As we indicate earlier, there is an important difference between simple nodes and \textbf{meta nodes} that represent a \textbf{community of nodes and edges with their own characteristics}. Without bringing these characteristics of meta nodes to the maximal flow match, we cannot argue that the pairing obtained represents an analysis based on a participating community characteristics. Hence, it is important to identify how qualitative characteristics can be mapped quantitatively to a weight metric (or weight of the meta edge in community bipartite graph) to influence the bipartite matching. Below, we detail the three proposed weight metrics and intuition behind them.

\subsection{Number of Inter-Community Edges ($\omega_e$)}
\label{sec:m1}

This metric uses the number of inter-community edges of the participating communities as the weight (normalized) for the meta-edge connecting the two communities (meta nodes) in the community bipartite graph. The intuition behind this metric is \textit{maximum connectivity} (size of the community is to some extent factored into it) without including other characteristics of the communities. This weight connotes \textit{maximum interaction between two communities} acting as meta-nodes (used for ~\ref{list:maximum-interaction}, ~\ref{list:We-acyclic3-adm}-~\ref{list:We-cyclic3-madm}.)
\\

For every meta edge $(u_i^m, u_k^n)$ $\in$ $L'_{i,k}$, where $u_i^m$ and $u_k^n$ are the meta nodes corresponding to communities $c_i^m$ and $c_k^n$, respectively, in the community bipartite graph, the weight, 

\textit{\centerline{$\omega_e$($u_i^m$, $u_k^n$) =  $|x_{i,k}|$, ~where}} 
\textit{\noindent $x_{i,k}$  = $\bigcup~~ \{(a, b): a \in v_i^{c^m}, b \in v_k^{c^n}, ~and~ (a, b) \in L_{i,k}\}$. Each meta edge weight is normalized by dividing by max($\omega_e$).}

\subsection{Density and Edge Fraction ($\omega_d$)}
\label{sec:m2}

The intuition behind this metric is to not only bring the edge contribution as a fraction (instead of the total number of edges as in $\omega_e$.), but also participating community characteristics. Density which captures internal structure of a community is used.
Clearly, \textit{higher the densities and larger the edge fraction, the stronger is the interaction between two meta nodes (or communities.)} Since each of these three components (each being a fraction) increases the strength of the inter-layer coupling, they are  multiplied to generate the weight of the meta edge.  The domain of this weight will be $(0,1]$. Formally, using the density formula,
\\

For every $(u_i^m, u_k^n)$ $\in$ $L'_{i,j}$, where $u_i^m$ and $u_k^n$ denote the communities, $c_i^m$ and $c_k^n$ in the community bipartite graph, respectively, the weight,

\textit{\centerline{$\omega_d$($u_i^m$, $u_k^n$) =  $\frac{2*|e_i^{c^m}|}{|v_i^{c^m}|*(|v_i^{c^m}|-1)}$ * $\frac{|x_{i,k}|}{|v_i^{c^m}|*|v_k^{c^n}|}$*$\frac{2*|e_k^{c^n}|}{|v_k^{c^n}|*(|v_k^{c^n}|-1)}$,}
}
\textit{\noindent where $x_{i,k}$ = $\bigcup~~ \{(a, b): a \in v_i^{c^m}, b \in v_k^{c^n}, ~and~ (a, b) \in L_{i,k}\}$}

~\ref{list:density} uses this weight metric based on the analysis description.

\subsection{Hub Participation ($\omega_h$)}
\label{sec:m3}

The $\omega_e$ metric captures only interaction between two communities and the metric $\omega_d$ includes the effect of community structure, but not the characteristics of the nodes that interact. Typically, we are interested in knowing whether highly influential nodes within a community also interact across the community. This can be translated to the \textit{participation of influential nodes within and across each participating community} for analysis. This can be modeled by using the notion of  \textbf{hub}\footnote{High centrality nodes (or hubs) have been defined based on different metrics, such as degree centrality (vertex degree), closeness centrality (mean distance of the vertex from other vertices), betweenness centrality (fraction of shortest paths passing through the vertex), and eigenvector centrality.}  \textbf{participation} within a community and their interaction across layers. In this paper, we have used degree centrality for this metric to connote higher influence. Again, ratio of participating hubs from each community and the edge fraction are multiplied to compute $\omega_h$. Formally,
\\

For every $(u_i^m, u_k^n)$ $\in$ $L'_{i,k}$, where $u_i^m$ and $u_k^n$ denote the communities, $c_i^m$ and $c_k^n$ in the community bipartite graph, respectively, the weight,

\textit{\centerline{ $\omega_h$($u_i^m$, $u_k^n$) =  $\frac{|H_{i,k}^{m,n}|}{|H_i^m|}$ * $\frac{|x_{i,k}|}{|v_i^{c^m}|*|v_k^{c^n}|}$*$\frac{|H_{k,i}^{n,m}|}{|H_k^n|}$,}
}
\textit{\noindent where $x_{i,k}$ = $\bigcup~~\{(a, b): a \in v_i^{c^m}, b \in v_k^{c^n}, ~and~ (a, b) \in L_{i,j}\}$; $H_i^m$ and $H_k^n$ are set of hubs in $c_i^m$ and $c_k^n$, respectively; $H_{i,k}^{m,n}$ is the set of hubs from $c_i^m$ that are connected to $c_k^n$; $H_{k,i}^{n,m}$ is the set of hubs from $c_k^n$ that are connected to $c_i^m$ }.
\\ 

From the analysis description, ~\ref{list:hub-participation} has chosen this metric. 

In addition to the above, many other useful weight metrics have been explored to meet analysis requirements. Due to space constraints, we have included what we think are the most useful based on our  analysis experience.

\subsection{Uniqueness of Proposed Metrics}
\label{sec:compare-metrics}

Ideally, the alternatives for metrics should be independent of each other so they produce different analysis. Also, it is important that their computation be efficient. We believe that the three metrics proposed satisfy the above (see Section~\ref{sec:cost-analysis}.) The metrics make use of the \textit{pre-computed} community characteristics of the participating communities while deducing the weight of meta edge for bipartite match. However, as max-flow algorithm works by optimizing the global flow, independence of metrics does not imply unique maximal flow matches. This is corroborated by the experimental results as well. However, to show that the proposed metrics are unique, we can establish some boundary (or stringent) conditions under which some of them produce the same pairs as discussed below. Although the lemmas themselves are not that difficult to prove, they convey the uniqueness of the weight metrics.

\begin{description}
\item [Lemma~\ref{lemma:ed}] shows one of the rare conditions under which $\omega_e$ and $\omega_d$ produce the same results. \textit{Two layers each producing communities all of which are the same size cannot be common}!
\item[Lemma ~\ref{lemma:dh}] indicates that $\omega_d$ and $\omega_h$ produce the same results if the \textit{communities are not only cliques but all nodes in that community are also participating}. This again is a boundary condition.
\item [Lemma~\ref{lemma:edToh}] shows under what conditions, the three metrics produce the same results. Again, the \textit{conditions are extremely stringent} making these metrics unique,
\end{description}

\begin{lemma}
\label{lemma:ed}
\textit{For metrics $\omega_e$ and $\omega_d$, the 2-community results for layer $G_i$ and layer $G_j$ will be same if, \\
a). All communities of both layers are cliques, and \\
b). For each layer, all communities are equi-sized, that is they have same number of nodes}
\end{lemma}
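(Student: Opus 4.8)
The plan is to show that under hypotheses (a) and (b), the two weight functions $\omega_e$ and $\omega_d$ differ only by a single global positive constant on every meta edge of the community bipartite graph $CBG_{i,j}$, and then to invoke the fact that the maximal network flow matching $\Theta$ depends only on the relative magnitudes of the meta-edge weights, so it is invariant under multiplication of all weights by a fixed positive scalar. Together these force the matchings, and hence the 2-community tuples, to coincide.

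First I would exploit hypothesis (a). For a clique on $|v_i^{c^m}|$ vertices the edge count is $|e_i^{c^m}| = \frac{|v_i^{c^m}|(|v_i^{c^m}|-1)}{2}$, so the density factor $\frac{2|e_i^{c^m}|}{|v_i^{c^m}|(|v_i^{c^m}|-1)}$ appearing in $\omega_d$ equals exactly $1$, and likewise for the $c_j^n$ factor. Substituting these into the definition of $\omega_d$ collapses it to the pure edge fraction $\omega_d(u_i^m,u_j^n) = \frac{|x_{i,j}|}{|v_i^{c^m}|\,|v_j^{c^n}|}$. Next I would use hypothesis (b): letting $p$ denote the common community size in layer $G_i$ and $q$ the common size in layer $G_j$, we have $|v_i^{c^m}| = p$ and $|v_j^{c^n}| = q$ for every meta node, so the denominator $|v_i^{c^m}|\,|v_j^{c^n}| = pq$ is the \emph{same} constant across all meta edges. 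Hence $\omega_d(u_i^m,u_j^n) = \frac{1}{pq}\,|x_{i,j}| = \frac{1}{pq}\,\omega_e(u_i^m,u_j^n)$ before the respective normalizations; since each metric is additionally divided by its own global maximum, the two normalized weight vectors become identical.

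The remaining step, which I expect to be the only genuinely delicate point, is to argue that a common positive rescaling of all meta-edge weights leaves the maximal network flow matching unchanged. This follows because the optimal flow (equivalently, the maximum-weight matching realized by $\Theta$) maximizes a linear objective $\sum_{e \in M} \omega(e)$ over feasible matchings $M$, and scaling every $\omega(e)$ by a fixed $c > 0$ scales this objective by $c$ without altering its argmax; all ties are preserved exactly, so any deterministic realization of $\Theta$ returns the same set of pairs $\langle c_i^m, c_j^n \rangle$. Therefore the 2-community produced under $\omega_e$ equals the one produced under $\omega_d$, which is the claim.

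The care needed throughout is mostly bookkeeping rather than difficulty: I must confirm that the clique assumption annihilates \emph{both} density factors simultaneously, that equi-sizedness turns the edge-fraction denominator into a single shared constant rather than an edge-dependent quantity, and that the normalization by $\max(\omega)$ does not interfere with proportionality. The one conceptual hinge is the scale-invariance of $\Theta$, so I would state it explicitly as the pivotal observation; everything else reduces to the two algebraic simplifications above.
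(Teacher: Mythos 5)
Your proposal is correct and follows essentially the same route as the paper's own proof: both reduce $\omega_d$ to $\frac{1}{pq}\,\omega_e$ on every meta edge (the clique hypothesis killing the density factors, the equi-size hypothesis making the denominator a global constant) and then conclude that the max-flow matching is unchanged because all weights scale by a fixed positive proportion. Your version merely spells out the algebra and the scale-invariance of the matching objective that the paper leaves implicit.
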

\begin{proof}
The proof follows directly by equating the formula of the two metrics. In the scenario stated, for every meta edge in $CBG_{i,j}$, the weight with $\omega_d$ will be \textit{1/pq} times the weight with $\omega_e$, where \textit{p} and \textit{q} are the sizes of each community in $G_i$ and $G_j$, respectively. Thus, max-flow will output the same set of pairs, as all the weights change in a fixed proportion.    
\end{proof}

\begin{lemma}
\label{lemma:dh}
\textit{For metrics $\omega_d$ and $\omega_h$, the 2-community results for layer $G_i$ and $G_j$ will be same if, \\
a). All communities for both layers are cliques, and \\
b). For every community pair (say, $c_i^m$ and $c_j^n$), the product of the number of participating hubs is equal to the product of the number of nodes in the communities, i.e. $v_i^{c^m}$ * $v_j^{c^n}$ = $H_{i,j}^{m,n}$ * $H_{j,i}^{n,m}$}
\end{lemma}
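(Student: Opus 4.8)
The plan is to show that, under conditions (a) and (b), the two weight functions $\omega_d$ and $\omega_h$ assign exactly the same weight to every meta edge of the community bipartite graph $CBG_{i,j}$. Since the maximal network flow match depends only on the edge weights of $CBG_{i,j}$ (and is invariant under a common positive scaling of all weights), identical weights force an identical set of matched pairs, exactly as in the proof of Lemma~\ref{lemma:ed}. So the whole argument reduces to a term-by-term comparison of the two formulas under the stated hypotheses.

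First I would exploit condition (a). Writing $p = |v_i^{c^m}|$ and $q = |v_j^{c^n}|$ for an arbitrary community pair, a clique on $p$ nodes has exactly $\frac{p(p-1)}{2}$ edges, so each density factor $\frac{2|e_i^{c^m}|}{|v_i^{c^m}|(|v_i^{c^m}|-1)}$ collapses to $1$. Hence $\omega_d(u_i^m,u_j^n)=\frac{|x_{i,j}|}{pq}$, i.e.\ $\omega_d$ reduces to the bare edge fraction. The second consequence of (a) is the key observation for the hub terms: since every vertex of a $p$-clique has the same degree, degree centrality ranks them all equally, so every node of the community qualifies as a hub. Therefore $|H_i^m| = p$ and $|H_j^n| = q$.

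Next I would substitute these identities into $\omega_h$. With $|H_i^m|=p$ and $|H_j^n|=q$, the formula becomes $\omega_h(u_i^m,u_j^n)=\frac{|H_{i,j}^{m,n}|\,|H_{j,i}^{n,m}|}{pq}\cdot\frac{|x_{i,j}|}{pq}$. Condition (b) states precisely that $|H_{i,j}^{m,n}|\,|H_{j,i}^{n,m}| = pq$ for every community pair, so the leading fraction equals $1$ and $\omega_h(u_i^m,u_j^n)=\frac{|x_{i,j}|}{pq}$, which is exactly $\omega_d(u_i^m,u_j^n)$. As this holds on every meta edge, the two weighted community bipartite graphs coincide and the max-flow match returns the same pairs.

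The only delicate step is the hub identification in the second paragraph: the claim that the clique structure forces $|H_i^m| = |v_i^{c^m}|$ relies on the degree-centrality definition of a hub adopted in Section~\ref{sec:m3}, under which the tied degrees inside a clique make every node a hub. I expect this to be the main point to state carefully; once it is granted, the remaining work is a direct cancellation mirroring Lemma~\ref{lemma:ed}, and the stringency of simultaneously requiring the clique structure and the exact product identity in (b) is what makes this coincidence a genuine boundary condition rather than the generic case.
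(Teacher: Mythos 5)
Your proposal is correct and follows essentially the same route as the paper, whose entire proof is the one-line remark that the conditions follow ``by equating the two formulae'': you simply carry out that equation explicitly (cliques collapse the density factors, condition (b) cancels the hub ratios), and your ``delicate'' claim that every node of a clique counts as a hub under degree centrality is exactly the reading the paper itself uses when it invokes ``every node is a hub (property of a clique)'' in the proof of Lemma~\ref{lemma:edToh}. The only thing worth adding is that this claim is not even needed: since $H_{i,j}^{m,n}\subseteq H_i^m\subseteq v_i^{c^m}$ and $H_{j,i}^{n,m}\subseteq H_j^n\subseteq v_j^{c^n}$, condition (b) sandwiches $|H_i^m|\,|H_j^n|$ between $|H_{i,j}^{m,n}|\,|H_{j,i}^{n,m}|$ and $|v_i^{c^m}|\,|v_j^{c^n}|$, forcing the hub-ratio product to equal $1$ regardless of how hubs are defined.
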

\begin{proof}
The above conditions can be derived by equating the two formulae.
\end{proof}

\begin{lemma}
\label{lemma:edToh}
\textit{For layers $G_i$ and $G_j$, the common 2-community tuples with metrics $\omega_e$ and $\omega_d$, are generated for $\omega_h$, if \\
a). All communities for both layers are cliques and, \\
b). For every inter-layer edge between two communities, there is unique participation from both layers, \textbf{or}, the number of inter-layer edges where there is unique participation from both layers must be at least the minimum number of nodes from two layers}
\end{lemma}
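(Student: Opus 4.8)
The plan is to reduce the claim, exactly in the equating-of-formulae style of Lemmas~\ref{lemma:ed} and~\ref{lemma:dh}, to a comparison of the closed-form meta-edge weights on $CBG_{i,j}$. Any tuple common to the $\omega_e$ and $\omega_d$ matchings is in particular an element of the $\omega_d$ matching, so it suffices to show that under (a) and (b) the metric $\omega_h$ induces the \emph{same} weighted bipartite graph as $\omega_d$; maximal network flow then returns an identical set of pairs, and the common $(\omega_e,\omega_d)$ tuples are a fortiori produced by $\omega_h$. Concretely, I would aim to show that conditions (a) and (b) here imply the hypotheses of Lemma~\ref{lemma:dh} (cliques plus the product identity $|v_i^{c^m}|\,|v_j^{c^n}| = |H_{i,j}^{m,n}|\cdot|H_{j,i}^{n,m}|$), so that Lemma~\ref{lemma:dh} hands me $\omega_d \equiv \omega_h$ directly and the result follows by set inclusion.

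First I would discharge condition (a), which is the easy half and is shared by all three earlier lemmas. For a clique on $r$ nodes the density $\tfrac{2|e|}{r(r-1)}$ equals $1$, so both density factors in $\omega_d$ drop out and $\omega_d(u_i^m,u_j^n)$ reduces to the pure edge fraction $\tfrac{|x_{i,j}|}{|v_i^{c^m}|\,|v_j^{c^n}|}$. The same hypothesis pins down the hub sets used by $\omega_h$: in a clique every vertex has identical degree $r-1$, so under the degree-centrality convention of Section~\ref{sec:m3} every node is a hub, giving $H_i^m = v_i^{c^m}$ and $H_j^n = v_j^{c^n}$. Writing $p=|v_i^{c^m}|$ and $q=|v_j^{c^n}|$, the denominators $|H_i^m|,|H_j^n|$ appearing in $\omega_h$ therefore collapse to $p$ and $q$, and $\omega_h$ reduces to $\tfrac{|H_{i,j}^{m,n}|}{p}\cdot\tfrac{|x_{i,j}|}{p\,q}\cdot\tfrac{|H_{j,i}^{n,m}|}{q}$.

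The second, and harder, step is to use condition (b) to force the two participation numerators to be maximal, i.e.\ $|H_{i,j}^{m,n}| = p$ and $|H_{j,i}^{n,m}| = q$, so that both ratios equal $1$ and $\omega_h$ collapses to the very edge fraction $\tfrac{|x_{i,j}|}{p\,q}$ obtained for $\omega_d$. Since each factor is bounded ($|H_{i,j}^{m,n}|\le p$, $|H_{j,i}^{n,m}|\le q$), full participation on both sides is equivalent to the product identity $p\,q = |H_{i,j}^{m,n}|\cdot|H_{j,i}^{n,m}|$, which is precisely condition (b) of Lemma~\ref{lemma:dh}. The main obstacle is therefore translating the two disjuncts of condition (b) into this identity: I would read ``unique participation from both layers'' of an inter-layer edge as that edge contributing a previously uncounted vertex on \emph{each} side, so that a family of such edges of size at least $\min(p,q)$ saturates the smaller community, and then argue that together with cliqueness this forces saturation of the larger community as well. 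Pinning down this interpretation precisely, and checking that the threshold $\min(p,q)$ (rather than the stronger ``every edge is unique'' alternative joined by the \textbf{or}) is exactly what drives \emph{both} $|H_{i,j}^{m,n}|/p$ and $|H_{j,i}^{n,m}|/q$ to $1$ instead of merely one of them, is where the stringency advertised in the lemma's preamble actually resides and is the only nonroutine part.

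Having established $\omega_h(u_i^m,u_j^n)=\tfrac{|x_{i,j}|}{p\,q}=\omega_d(u_i^m,u_j^n)$ on every meta edge of $CBG_{i,j}$, I would close by invoking Lemma~\ref{lemma:dh} (or simply noting the weighted graphs are identical), so maximal network flow is forced to return the same matching under $\omega_h$ as under $\omega_d$. That common matching contains, by construction, every tuple on which $\omega_e$ and $\omega_d$ already agree, which is exactly the assertion of the lemma.
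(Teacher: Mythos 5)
Your proposed reduction---showing that under (a) and (b) the metric $\omega_h$ assigns the \emph{same} weight as $\omega_d$ to every meta edge, so that Lemma~\ref{lemma:dh} (or ``identical weights give identical max-flow output'') finishes the job---is not the paper's route, and more importantly it cannot be completed, because condition (b) does not force full hub participation on both sides. ``Unique participation'' of an inter-layer edge means its endpoints are not reused by other inter-layer edges, so the uniquely participating edges form a graph-theoretic matching between the two communities, and the second disjunct of (b) only demands at least $\min(p,q)$ such edges (writing $p=|v_i^{c^m}|$, $q=|v_j^{c^n}|$); this saturates the smaller community at best and never forces saturation of the larger one. Concretely, let $c_i^m$ be a clique on $p=3$ nodes, $c_j^n$ a clique on $q=5$ nodes, and let the inter-layer edges between them be a matching of size $3$: conditions (a) and (b) hold under either disjunct, every node is a hub, yet $\omega_h=\frac{3}{3}\cdot\frac{3}{15}\cdot\frac{3}{5}=0.12$ while $\omega_d=\frac{3}{15}=0.2$. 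In general the ratio is $\omega_h/\omega_d=|x_{i,j}|^2/(pq)$, which varies from meta edge to meta edge, so the two weighted graphs are neither equal nor proportional. You flagged exactly this step (driving \emph{both} participation ratios to $1$) as the ``only nonroutine part,'' but it is not a detail awaiting a careful interpretation: it is false. A telling symptom of the overreach is that if your reduction held, \emph{every} $\omega_d$ tuple would carry over to $\omega_h$, making the lemma's hypothesis that the tuple is also produced by $\omega_e$ redundant---whereas both the statement and the paper's proof use that hypothesis essentially.

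The paper's argument is multiplicative rather than an equality of weight functions. Your first step is the same as the paper's and is correct: cliques kill the density factors, so $\omega_d$ becomes the edge fraction $|x_{i,j}|/(pq)$, and every node is a hub. But then, under (b), the number of participating hubs on each side tracks the number of inter-layer edges, so with the first disjunct one gets $\omega_h=\frac{|x_{i,j}|}{p}\cdot\frac{|x_{i,j}|}{pq}\cdot\frac{|x_{i,j}|}{q}=|x_{i,j}|\cdot(|x_{i,j}|/(pq))^2$, i.e., the product of the (unnormalized) $\omega_e$ weight and the square of the $\omega_d$ weight. A pair that is matched under both $\omega_e$ and $\omega_d$ maximizes each factor among the alternatives incident to $c_i^m$ (this is points (i) and (ii) in the paper's proof), hence maximizes the product, hence carries the maximal $\omega_h$ meta-edge weight and is retained by the max-flow match. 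Note that the paper, like you, leans on the informal step ``maximal incident weight $\Rightarrow$ matched by max flow,'' so the gap in your proposal is not one of rigor relative to the paper; it is that the object you try to prove ($\omega_h\equiv\omega_d$) is simply not a consequence of (a) and (b), and the repair is to discard the appeal to Lemma~\ref{lemma:dh} and instead combine the two maximality hypotheses as above.
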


\begin{proof}
Suppose, community $c_i^1$ from $G_i$ got matched to $c_j^1$ out of k possible options, $c_j^1$, $c_j^2$, ..., $c_j^k$ from $G_j$ using $\omega_e$ and $\omega_d$. This means, \\
(i) Out of all communities in $G_j$, $c_i^1$ has \textit{maximum interaction (inter-layer edges)} with $c_j^1$ (condition for $\omega_e$.) Moreover, every node is a hub (property of a clique - condition (a).)
Thus, because of the participation criteria highlighted in condition (b) \textit{$c_j^1$ is the one that has maximum participating hubs among the k alternatives}. \\
(ii) $c_i^1$ has \textit{highest edge fraction interaction} with $c_j^1$ from $G_j$ among the k alternatives (condition for $\omega_d$.)

Therefore, from (i) and (ii) it is deduced that the meta edge weight (in $CBG_{i,j}$ for metric $\omega_h$) between $c_i^1$ and $c_j^1$ will be the \textit{maximum among all the weights} assigned for the meta edges going from $c_i^1$. Thus, among the k community alternatives from $G_j$, the community $c_i^1$ will be matched to $c_j^1$ by the max-flow algorithm. For any other matching, the overall flow across the CBG will reduce and may even reduce the total number of matches.
\end{proof}

\subsection{k-community Detection Cost Components}
\label{sec:cost-analysis}

For a given specification of a k-community, its detection has several components each with its own cost. Below, we summarize their individual complexity and cost.

\begin{enumerate}
    \item Cost of generating 1-community for each layer (or a subset of layers needed) can be done in parallel, hence bounding this \textbf{one-time cost} to the largest one (typically for a layer with maximum density.)
    \item For the proposed analysis metrics, part of them are, again, \textbf{one-time costs} calculated independently on the results of 1-community. These  costs for $\omega_d$ and $\omega_h$ can be computed on a single pass of the communities  using their node/edge details generated by the community detection algorithm.

    \item The cost of a 2-community (base case) detection includes the cost of generating the bipartite graph, computing the weight of each meta edge of the community bipartite graph for a given $\omega$, and maximal network flow cost. This has to be done for each iteration as well. Only the edge fraction (or the maximum number of edges) and participating hubs need to be computed during the iteration. The cost of maximal network flow for the Edmonds-Karp implementation of Ford-Fulkerson method ~\cite{edmonds1972theoretical,ford_fulkerson_1956} used in our experiments is O($|V|*|E|^2$), where V and E are the number of nodes and edges in the community bipartite graph, respectively. 
    The bipartite graph can be generated during the computation of weights for the meta edges. Luckily, in our community bipartite graph, the number of meta edges is \textbf{order of magnitude less } than the number of edges between layers. Also, the number of meta nodes is bounded by the base case.
    
    \item The components costs within each iteration (and also the base case) are: i) weight computation, community bipartite graph creation, and maximal network flow cost. This cost is bounded by the base case (and will not increase) due to no or inconsistent matches as iterations progress. 

\end{enumerate}

In summary, for the proposed decoupling approach, the bulk of the cost is \textbf{one-time} (1-community detection and portions of $\omega_d ~and~ \omega_h$.) \textbf{Cost within each iteration is insignificant} compared to the one-time cost which is also borne out by the experimental analysis in Section~\ref{sec:experiments}.
\section{Experimental Analysis}
\label{sec:experiments}

In this section, we present the results of our  analysis of  ~\ref{list:maximum-interaction} through ~\ref{list:We-cyclic3-madm} discussed in Section~\ref{sec:application-and-analysis} on the IMDb data set~\cite{data/type2/IMDb} modeled as a HeMLN. To recap, we have 3 layers,  Actor (layer A), Director (layer D), and Movies (layer M) and 3 inter-layer relationships (directs-actor, acts-in-a-movie, and directs-movies.) We start with the individual layer analysis and  continue with all the 6 analysis using the approach proposed in this paper and discuss their corroboration with the intuitive expectations and analysis discussed earlier. 

\begin{figure}[h]
   \centering \includegraphics[width=\linewidth]{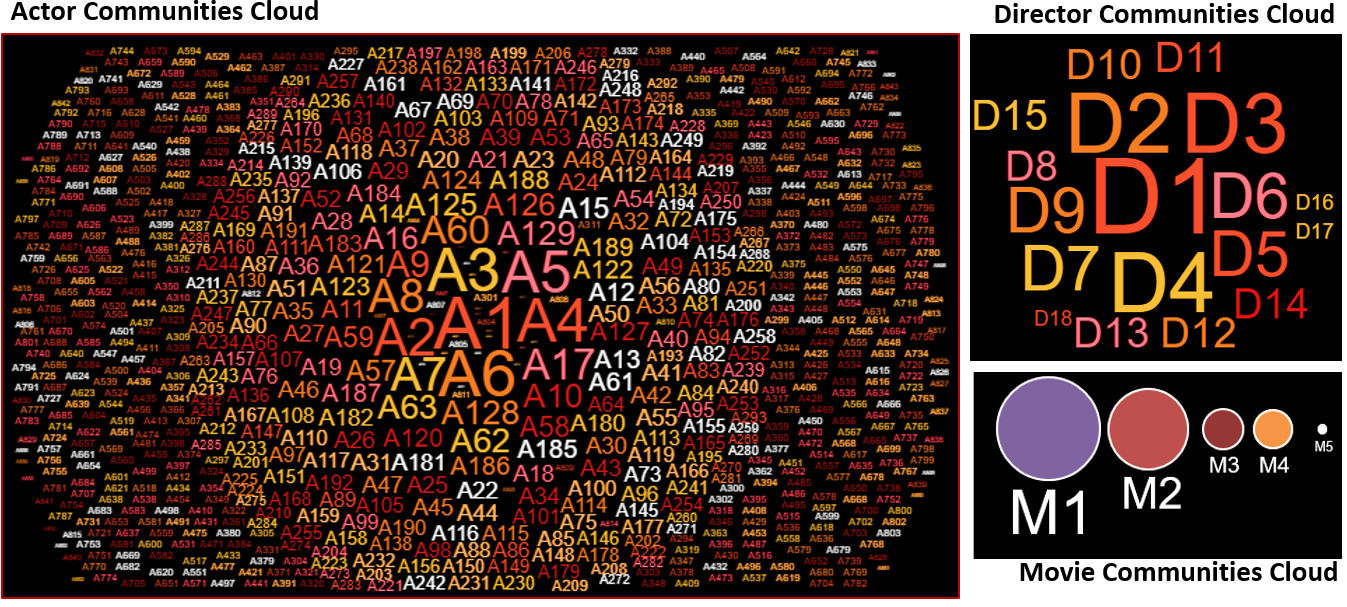}
    \caption{Community Clouds Based on Size}
   \label{fig:cloud}
\end{figure}

\textbf{Experiment Setup:} A random set of 1000 movies that released anywhere in the world in the period from 2001 to 2015 were chosen for building the HeMLN, comprising of 4588 actors and 1091 directors from that period. A 1-community is computed for each layer using the widely used package Infomap~\cite{InfoMap2014}, that works in a hierarchical fashion while optimizing the \textit{map equation}, which exploits the \textit{information - theoretic duality} between the problem of compressing data, and the problem of detecting and extracting significant patterns or structures (communities) within those data based on \textit{information flow}.  
We use the publicly available package from NetworkX for the bipartite graph match~\cite{Python:maximal-match}. The k-community detection Algorithm \ref{alg:k-community} was implemented in Python version 3.6 and was executed on a quad-core $8^{th}$ generation Intel i5 processor Windows 10 machine with 8 GB RAM. 
\subsection{IMD\MakeLowercase{b} Individual Layer Analysis}

For the purposes of this analysis, we generated disjoint 1-community for each individual layer.
The statistics for each layer are shown in Table~\ref{table:layer-stats}. The community size distribution is shown as a cloud for each layer in Figure~\ref{fig:cloud}.
For clarity, we show each layer with layer prefix and community id.

A number of insights can be gleaned from layer analysis. The average, mode and median actor community size are 5.3, 4 and 4, respectively. The proportion of communities of median size or less (529 out of 819 or at least 61\%) indicate that co-acting is limited to smaller groups of actors. There is only 1 large group (51) of co-actors. Further, 790 (92\%) clique communities indicate that actors who co-act do it with the same group.  In contrast,  the number of communities is proportionately less (with respect to the number of nodes) in the director layer indicating that directors do not limit themselves to directing a small set of genres. The lower number of communities in the Movie (rating) layer is expected as it is a layer of 5 non-overlapping rating ranges.  Also, 35\% of movies (major chunk) have ratings in the range [6-8), A small number of movies (community M5) has the lowest range of rating [0, 2). Also, we know that all of them are cliques.

\begin{table}[h!t]
\centering
    \begin{tabular}{|c|c|c|c|}
            \hline
                 & Actor & Director & Movie \\
            \hline
                \#Vertices & 4588 & 1091 & 1000 \\
            \hline
            Layer Density & 0.00097 & 0.15471 & 0.17516 \\
            \hline
            \#Comm. (Size $>$ 1) & 862 & 18 & 5 \\
            \hline
                \#Clique Communities & 790 & 7 & 5 \\
            \hline
                \% Vertices in Cliques & 82.6\% & 7.4\% &  100\%\\
            \hline
                Avg. Size & 5.3 & 60.61 & 132.5 \\                
            \hline
                Mode Size (\#Comm) & 4 (510) & 2 (3) & - \\
            \hline
                Median Size (\#Comm.) & 4 (510) & 31.5 (0) & 57 (0)\\
            \hline
                Max. Size (\#Comm.) & 51 (1) & 357 (1) & 352 (1)\\
            \hline
                Min. Size (\#Comm.) & 2 (19) & 2 (3) & 3 (1)\\
            \hline

        \end{tabular}
            \caption{1-community Statistics for IMDb HeMLN}
    \label{table:layer-stats}
\end{table}

\subsection{IMD\MakeLowercase{b} 2-community Analysis}
\label{sec:2-com}

With 3 layers and 3 weight metrics, there are a total of 9 possible 2-community specifications. As the movie-rating layer has very few (and only clique) communities, we have chosen the A and D layers (with A as the left set of the bipartite match) for analysis using different metrics as shown in ~\ref{list:maximum-interaction} to ~\ref{list:hub-participation}. As each metric is based on a different intuition  for inter-layer coupling, they are expected to give different results.

Table \ref{table:A1-results-allD}, \ref{table:A2-results-allD}, and \ref{table:A3-results-allD} show the results of 2-community as community pairings, respectively, for each metric.
When the entire 1-community  is used for each layer,  \textbf{44.4\% pairings (8 out of the 18) (marked in \textcolor{applegreen}{green})} are the \textit{same across metrics}. In all the common pairings, at least one of the participating community is a clique.

To understand the analysis differences across $\omega$, we grouped non-clique (density $<$ 1) communities from each layer for analysis.
Table \ref{table:A1-results-Dlt1}-\ref{table:A3-results-Dlt1}
shows the results for non-clique communities from each layer.
Just \textbf{one common pairing} is found (marked in \textcolor{cyan}{\textbf{cyan}}) \textbf{validating the uniqueness of proposed  metrics.}

For completeness, Table \ref{table:A1-results-Deq1}, \ref{table:A2-results-Deq1}, and \ref{table:A3-results-Deq1} show the community pairings when only cliques are used from each layer. Every matched pair that appears for $\omega_e$ (Table \ref{table:A1-results-Deq1}) and $\omega_d$ (Table \ref{table:A2-results-Deq1}) satisfied the conditions listed in Lemma \ref{lemma:edToh}, thus justifying their appearance for $\omega_h$ (Table \ref{table:A3-results-Deq1}) as well (marked in \textcolor{blue}{\textbf{blue}}.)

\begin{table}[H]
\renewcommand{\arraystretch}{1}
\vspace{-20pt}
\centering
\subfloat[]{

    \begin{tabular}{|c|c|}
        \hline
        \multicolumn{2}{|c|}{\textbf{\ref{list:maximum-interaction}}, $\omega_e$} \\
        \hline
        $c_A$ & $c_D$ \\
        \hline
        A6 & D1 \\
        \hline
        A252 & D2 \\
        \hline
        \textcolor{applegreen}{\textbf{A577}:c} & \textcolor{applegreen}{\textbf{D3}} \\
        \hline
        A2 & D4 \\
        \hline
        \textcolor{applegreen}{\textbf{A555}:c} & \textcolor{applegreen}{\textbf{D5}} \\
        \hline
        A100:c & D6 \\
        \hline
        \textcolor{applegreen}{\textbf{A374}:c} & \textcolor{applegreen}{\textbf{D7}:c} \\
        \hline
        \textcolor{applegreen}{\textbf{A484}:c} & \textcolor{applegreen}{\textbf{D8}} \\
        \hline
        A10 & D9 \\
        \hline
        A683:c & D10 \\
        \hline
        \textcolor{applegreen}{\textbf{A83}:c} & \textcolor{applegreen}{\textbf{D11}} \\
        \hline
        \textcolor{applegreen}{\textbf{A89}:c} & \textcolor{applegreen}{\textbf{D12}:c} \\
        \hline
        \textcolor{applegreen}{\textbf{A46}:c} & \textcolor{applegreen}{\textbf{D13}:c} \\
        \hline
        A161:c & D14:c \\
        \hline
        \textcolor{applegreen}{\textbf{A220}:c} & \textcolor{applegreen}{\textbf{D15}} \\
        \hline
        A85:c & D16:c \\
        \hline
        A188 & D17:c \\
        \hline
        A53:c & D18:c \\
        \hline
        \end{tabular}
        \label{table:A1-results-allD}
}
\subfloat[]{

    \begin{tabular}{|c|c|}
        \hline
        \multicolumn{2}{|c|}{\textbf{\ref{list:density}}, $\omega_d$} \\
        \hline
        $c_A$ & $c_D$ \\
        \hline
        A511:c & D1 \\
        \hline
        A204:c & D2 \\
        \hline
        \textcolor{applegreen}{\textbf{A577}:c} & \textcolor{applegreen}{\textbf{D3}} \\
        \hline
        A483:c & D4 \\
        \hline
        \textcolor{applegreen}{\textbf{A555}:c} & \textcolor{applegreen}{\textbf{D5}} \\
        \hline
        A332:c & D6 \\
        \hline
        \textcolor{applegreen}{\textbf{A374}:c} & \textcolor{applegreen}{\textbf{D7}:c} \\
        \hline
        \textcolor{applegreen}{\textbf{A484}:c} & \textcolor{applegreen}{\textbf{D8}} \\
        \hline
        A158:c & D9 \\
        \hline
        A683:c & D10 \\
        \hline
        \textcolor{applegreen}{\textbf{A83}:c} & \textcolor{applegreen}{\textbf{D11}} \\
        \hline
        \textcolor{applegreen}{\textbf{A89}:c} & \textcolor{applegreen}{\textbf{D12}:c} \\
        \hline
        \textcolor{applegreen}{\textbf{A46}:c} & \textcolor{applegreen}{\textbf{D13}:c} \\
        \hline
        A824:c & D14:c \\
        \hline
        \textcolor{applegreen}{\textbf{A220}:c} & \textcolor{applegreen}{\textbf{D15}} \\
        \hline
        A310:c & D16:c \\
        \hline
        A330:c & D17:c \\
        \hline
        A793:c & D18:c \\
        \hline
        \end{tabular}
        \label{table:A2-results-allD}
}
\subfloat[]{
    \begin{tabular}{|c|c|}
        \hline
         \multicolumn{2}{|c|}{\textbf{\ref{list:hub-participation}}, $\omega_h$} \\
        \hline
        $c_A$ & $c_D$ \\
        \hline
        A511:c & D1 \\
        \hline
        A204:c & D2 \\
        \hline
        \textcolor{applegreen}{\textbf{A577}:c} & \textcolor{applegreen}{\textbf{D3}} \\
        \hline
        A483:c & D4 \\
        \hline
        \textcolor{applegreen}{\textbf{A555}:c} & \textcolor{applegreen}{\textbf{D5}} \\
        \hline
        A332:c & D6 \\
        \hline
        \textcolor{applegreen}{\textbf{A374}:c} & \textcolor{applegreen}{\textbf{D7}:c} \\
        \hline
        \textcolor{applegreen}{\textbf{A484}:c} & \textcolor{applegreen}{\textbf{D8}} \\
        \hline
        A381:c & D9 \\
        \hline
        A828:c & D10 \\
        \hline
        \textcolor{applegreen}{\textbf{A83}:c} & \textcolor{applegreen}{\textbf{D11}} \\
        \hline
        \textcolor{applegreen}{\textbf{A89}:c} & \textcolor{applegreen}{\textbf{D12}:c} \\
        \hline
        \textcolor{applegreen}{\textbf{A46}:c} & \textcolor{applegreen}{\textbf{D13}:c} \\
        \hline
        A824:c & D14:c \\
        \hline
        \textcolor{applegreen}{\textbf{A220}:c} & \textcolor{applegreen}{\textbf{D15}} \\
        \hline
        A310:c & D16:c \\
        \hline
        A330:c & D17:c \\
        \hline
        A793:c & D18:c \\
        \hline        
        \end{tabular}

        \label{table:A3-results-allD}
}
\vspace{-12pt}
\caption*{\small{\textbf{All Communities}, c indicates a clique,  \textcolor{applegreen}{44.4\% common pairings} (862 A Communities, 18 D Communities)}}

\subfloat[]{

    \begin{tabular}{|c|c|}
        \hline
        \multicolumn{2}{|c|}{\textbf{\ref{list:maximum-interaction}}, $\omega_e$} \\
        \hline
        $c_A$ & $c_D$ \\
        \hline
        \textcolor{blue}{\textbf{A374}} & \textcolor{blue}{\textbf{D7}} \\
        \hline
        \textcolor{blue}{\textbf{A89}} & \textcolor{blue}{\textbf{D12}} \\
        \hline
        \textcolor{blue}{\textbf{A46}} & \textcolor{blue}{\textbf{D13}} \\
        \hline
        A161 & D14 \\
        \hline
        A85 & D16 \\
        \hline
        \textcolor{blue}{\textbf{A330}} & \textcolor{blue}{\textbf{D17}} \\
        \hline
        A53 & D18 \\
        \hline
        \end{tabular}
        \label{table:A1-results-Deq1}
}
\subfloat[]{

    \begin{tabular}{|c|c|}
        \hline
        \multicolumn{2}{|c|}{\textbf{\ref{list:density}}, $\omega_d$} \\
        \hline
        $c_A$ & $c_D$ \\
        \hline
        \textcolor{blue}{\textbf{A374}} & \textcolor{blue}{\textbf{D7}} \\
        \hline
        \textcolor{blue}{\textbf{A89}} & \textcolor{blue}{\textbf{D12}} \\
        \hline
        \textcolor{blue}{\textbf{A46}} & \textcolor{blue}{\textbf{D13}} \\
        \hline
        A824 & D14 \\
        \hline
        A310 & D16 \\
        \hline
        \textcolor{blue}{\textbf{A330}} & \textcolor{blue}{\textbf{D17}} \\
        \hline
        A793 & D18 \\
        \hline
        \end{tabular}
        \label{table:A2-results-Deq1}
}
\subfloat[]{
    \begin{tabular}{|c|c|}
        \hline
         \multicolumn{2}{|c|}{\textbf{\ref{list:hub-participation}}, $\omega_h$} \\
        \hline
        $c_A$ & $c_D$ \\
        \hline
        \textcolor{blue}{\textbf{A374}} & \textcolor{blue}{\textbf{D7}} \\
        \hline
        \textcolor{blue}{\textbf{A89}} & \textcolor{blue}{\textbf{D12}} \\
        \hline
        \textcolor{blue}{\textbf{A46}} & \textcolor{blue}{\textbf{D13}} \\
        \hline
        A824 & D14 \\
        \hline
        A310 & D16 \\
        \hline
        \textcolor{blue}{\textbf{A330}} & \textcolor{blue}{\textbf{D17}} \\
        \hline
        A793 & D18 \\
          \hline
        \end{tabular}

        \label{table:A3-results-Deq1}
}
\vspace{-12pt}
\caption*{\small{\textbf{Clique Communities only}, \textcolor{blue}{57.14\% common pairings} (790 A Communities, 7 D Communities)}}

\subfloat[]{

    \begin{tabular}{|c|c|}
        \hline
        \multicolumn{2}{|c|}{\textbf{\ref{list:maximum-interaction}}, $\omega_e$} \\
        \hline
        $c_A$ & $c_D$ \\
        \hline
        A6 & D1 \\
        \hline
        \textcolor{cyan}{\textbf{A252}} & \textcolor{cyan}{\textbf{D2}} \\
        \hline
        A1 & D3 \\
        \hline
        A2 & D4 \\
        \hline
        A187 & D6 \\
        \hline
        A4 & D8 \\
        \hline
        A10 & D9 \\
        \hline
        A17 & D11 \\
        \hline
        A129 & D15 \\
        \hline
        \end{tabular}
        \label{table:A1-results-Dlt1}
}
\subfloat[]{

    \begin{tabular}{|c|c|}
        \hline
        \multicolumn{2}{|c|}{\textbf{\ref{list:density}}, $\omega_d$} \\
        \hline
        $c_A$ & $c_D$ \\
        \hline
         A59 & D1 \\
        \hline
         \textcolor{cyan}{\textbf{A252}} & \textcolor{cyan}{\textbf{D2}} \\
        \hline
         A257 & D3 \\
        \hline
         A190 & D4 \\
        \hline
         A293 & D6 \\
        \hline
         A246 & D8 \\
        \hline
         A10 & D9 \\
        \hline
         A122 & D11 \\
        \hline
         A129 & D15 \\
        \hline
        \end{tabular}
        \label{table:A2-results-Dlt1}
}
\subfloat[]{
    \begin{tabular}{|c|c|}
        \hline
         \multicolumn{2}{|c|}{\textbf{\ref{list:hub-participation}}, $\omega_h$} \\
        \hline
        $c_A$ & $c_D$ \\
        \hline
        A59 & D1 \\
        \hline
         \textcolor{cyan}{\textbf{A252}} & \textcolor{cyan}{\textbf{D2}} \\
        \hline
        A184 & D3 \\
        \hline
        A190 & D4 \\
        \hline
        A293 & D6 \\
        \hline
        A4 & D8 \\
        \hline
        A16 & D9 \\
        \hline
        A122 & D11 \\
        \hline
        \end{tabular}

        \label{table:A3-results-Dlt1}
}
\vspace{-12pt}
\caption*{\small{\textbf{Non-clique Communities only}, \textcolor{cyan}{11.11\% common pairings} (72 A Communities, 11 D Communities)}}
\vspace{-8pt}
\caption{2-community results for (A $\Theta_{A,D}$ D)}
\label{table:2-community}
    
\end{table}

We know that the metric $\omega_e$ does not depend on the community characteristics, such as density and hub participation, unlike $\omega_d$ and $\omega_h$. This can also be validated from the results. All the actor communities that are part of matches for $\omega_d$ (Table \ref{table:A2-results-allD}) and $\omega_h$ (Table \ref{table:A3-results-allD}) are cliques, marked by \textit{`c'}, (which should be the case.) However, for the Table~\ref{table:A1-results-allD} which uses $\omega_e$, it is not the case validating our intuition. Even the removal of cliques does not effect the non-clique matches that were obtained for $\omega_e$ (Table \ref{table:A1-results-allD} and Table \ref{table:A1-results-Dlt1}.) 

\subsection{IMD\MakeLowercase{b} Acyclic 3-community Analysis}

For \ref{list:We-acyclic3-adm} and \ref{list:We-acyclic3-mad}, we need to generate \textit{acyclic} 3-community based on the specification which also includes order and metric $\omega_e$.
Specifically, for analysis \ref{list:We-acyclic3-adm} and ~\ref{list:We-acyclic3-mad},
the director and actor layer, become the left layer for community bipartite graph, respectively, during the second composition.

\begin{figure}[ht]
   \centering \includegraphics[width=0.8\linewidth]{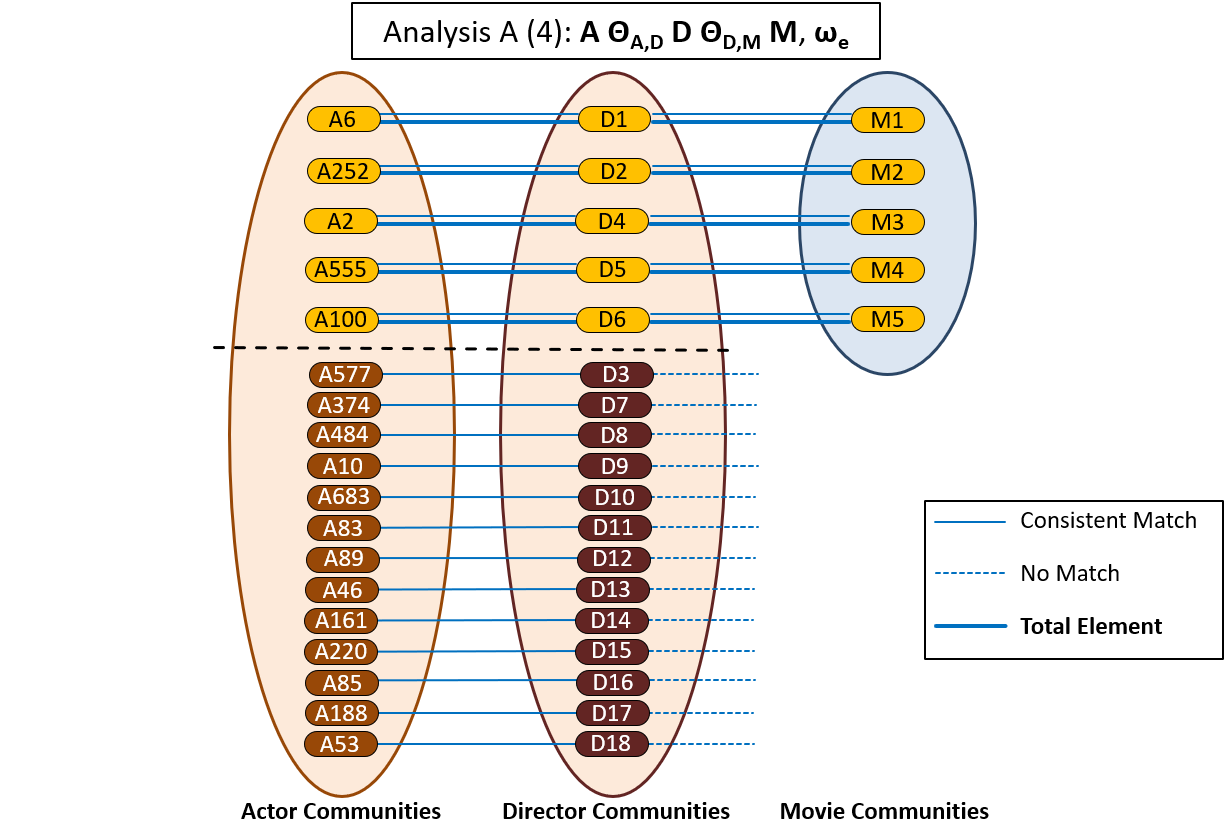}
   \caption{Acyclic 3-community (\textbf{total and partial})}
   \label{fig:A4-adm}
\end{figure}

Figures~\ref{fig:A4-adm} and ~\ref{fig:A5-mad}  show the 3-community analysis results diagrammatically for \ref{list:We-acyclic3-adm} and \ref{list:We-acyclic3-mad}, respectively. For analysis \ref{list:We-acyclic3-adm}, 18 elements (consistent matches) are obtained after the first composition, bounded by the 18 communities in the director layer. For the second composition, although all the 18 communities from layer director are carried over (as they paired during base case), only 5 
produce consistent matches with the movie layer, to get extended to become total elements (\textcolor{blue}{\textbf{bold blue line}}), whereas, for the other 13 there was a no match (\textcolor{blue}{broken blue lines}), thus becoming partial elements. 
Similarly, for \ref{list:We-acyclic3-mad}, every pairing from the first composition got extended in the second composition to produce 5 total elements (\textcolor{blue}{\textbf{bold blue lines}}), bounded by movie communities.  Moreover, for both the  result sets, none of the total elements overlap. Thus, \textbf{the \textit{order}} in which k layers are specified to obtain k-community determines the set of partial and total elements, providing insights corresponding to analysis specification.

\begin{figure}[ht]
   \centering \includegraphics[width=0.8\linewidth]{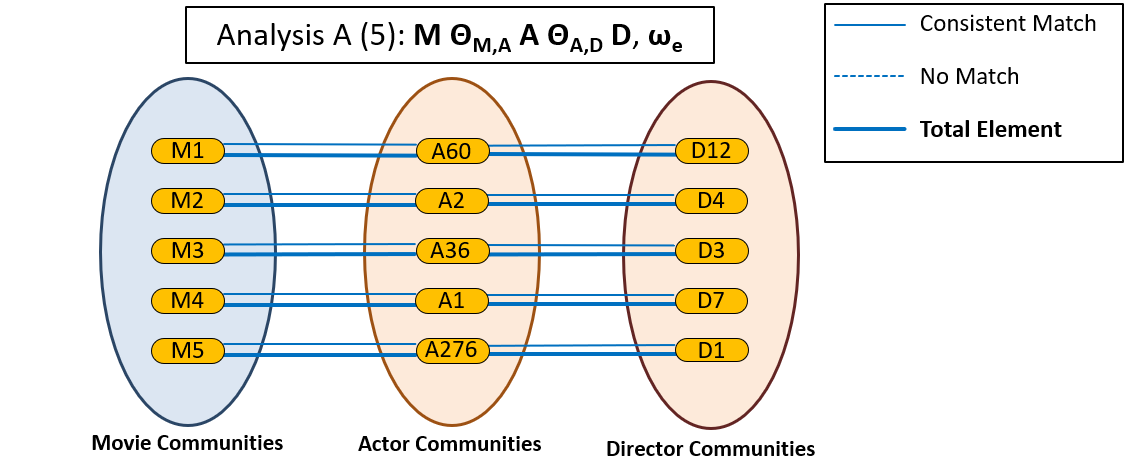}
   \caption{Acyclic 3-community (\textbf{no partial elements})}
   \label{fig:A5-mad}
\vspace{-10pt}
\end{figure}

\subsection{IMD\MakeLowercase{b} Cyclic 3-community Analysis}

Since 6 analysis orderings are possible for 3 layers and 3 sets of bipartite graphs forming a cycle (18 if metrics are included), choosing the order is important. \ref{list:We-cyclic3-madm} uses one of those orderings (M $\Theta_{M,A}$ A $\Theta_{A,D}$ D $\Theta_{D,M}$ M; $\omega_e$) where actors who are coupled with movie ratings collectively are further coupled with directors (giving directors who direct the popular actors for each rating.) This is coupled again with movies to check back whether the same director groups also have similar ratings. We have chosen $\omega_e$ for coupling.

\begin{figure}[h]
   \centering \includegraphics[width=0.8\linewidth]{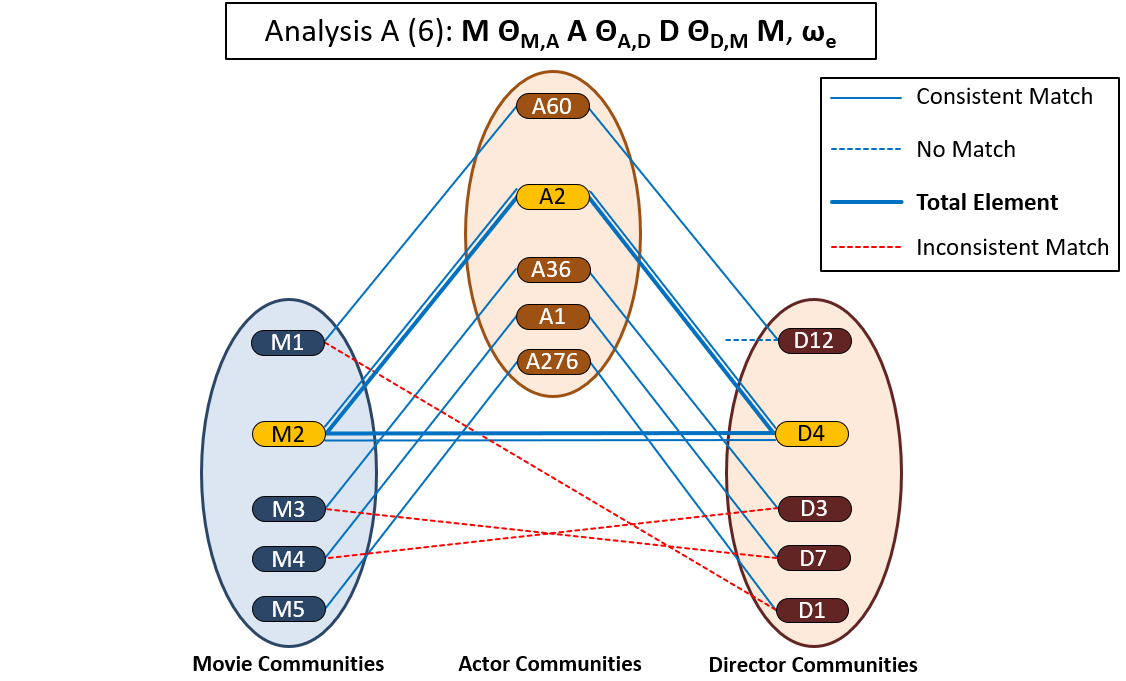}
   \caption{Progressive results of a cyclic 3-community}
   \label{fig:A6-madm}
\end{figure}

Results of each successive pairings (there are 3) are shown in Figure~\ref{fig:A6-madm} using the same color notation. Coupling of movie-actor pairs (first composition) results in 5 consistent matches bounded by the movie layer. \textbf{Also, it is easy to see from the figure that only one of them continues and becomes a total element for the cyclic 3-community (\textcolor{blue}{\textbf{bold blue triangle}}.) Further, the final result is an extension (M-A-D-M instead of M-A-D)} \textbf{of the acyclic 3-community result seen in Figure~\ref{fig:A5-mad}.}
When the base case is extended to the director layer (second composition), we got 5 consistent matches 
as can be seen in Figure~\ref{fig:A5-mad}.  The final composition to complete the cycle uses the 5 communities of director layer and 5 communities of the movie layer as left and right sets of community bipartite graph, respectively. \textbf{Only one consistent match is obtained to generate the total element (M2-A2-D4-M2) for the cyclic 3-community.} It is interesting to see 3 inconsistent matches (\textcolor{red}{red broken lines}) between the communities which clearly \textbf{indicate that all couplings are not satisfied by these pairs.} These result in 3 partial elements (M3-A36-D3, M4-A1-D7 and M5-A276-D1.)

\textbf{The inconsistent matches also highlight the importance of order which is fundamental to our k-community definition for analysis.} If a different order had been chosen (viz. director and actor layer as the base case), the result could have included the inconsistent matches. In this example, we also see \textit{one} no match (\textcolor{blue}{broken blue line}) in the final step, where D12 does not get matched to any movie community, thus generating the partial element, M1-A60-D12.

\subsection{Efficiency of Decoupled Approach}

\begin{figure}[ht]
   \centering \includegraphics[width=0.9\linewidth]{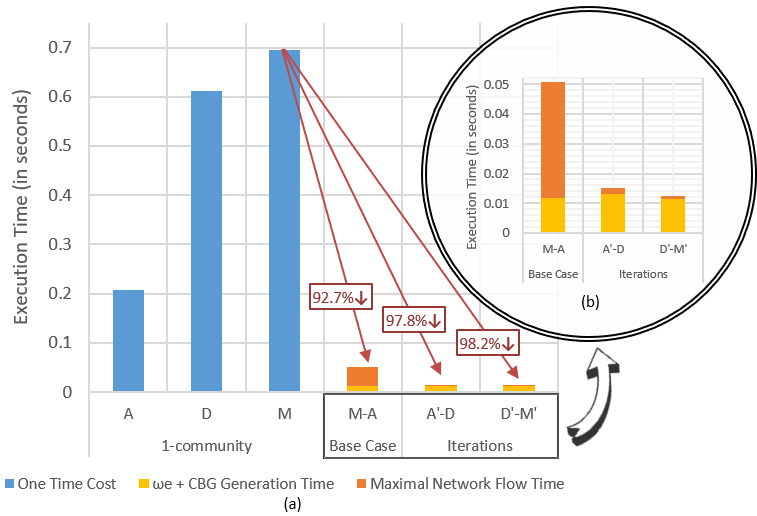}
   \caption{Performance Results for cyclic 3-community in \ref{list:We-cyclic3-madm}: (M $\Theta_{M,A}$ A $\Theta_{A,D}$ D $\Theta_{D,M}$ M; $\omega_e$)}
   \label{fig:perf-madm}
\end{figure}

The goal of the decoupling approach was to improve the efficiency of k-community detection using the divide and conquer approach. We illustrate that with the largest k-community we have computed which uses 3 iterations (including the base case.)
Figure \ref{fig:perf-madm}(a) shows the execution time for the one-time and iterative costs discussed earlier for \ref{list:We-cyclic3-madm}. The difference in one-time 1-community cost for the 3 layers follow their density shown in Table~\ref{table:layer-stats}. We can also see how the iterative cost is insignificant as compared to the one time cost (by an order of magnitude.) Iteration cost includes creating the bipartite graph, computing $\omega_e$ for meta edges, and maximal flow cost. As the iterations progress, the iterative cost decreases significantly as well. Even \textbf{the cost of all iterations together (0.078 sec) is still almost \textit{an order of magnitude less than the largest one-time cost} (0.694 sec for Movie layer.)} We have used this case as this subsumes all other cases. The zoomed in version (Figure \ref{fig:perf-madm} (b)) of the iterations further show how the iteration cost  reduces. As we had indicated, largest reduction comes for the iteration after the base case which can be clearly seen in Figure~\ref{fig:perf-madm} (b).

The \textbf{additional incremental cost for computing a k-community is extremely small validating the efficiency of decoupled approach}. 
\section{Conclusions}
\label{sec:conclusions}

In this paper, we have provided a structure-preserving definition of a k-community for a MLN, efficiency of its detection and versatility.  We adapted a composition function -  maximal network flow with customized weight metrics for a broader analysis. Finally, we used the approach for demonstrating its analysis capability and versatility using the IMDb data set.

\bibliographystyle{abbrv}
\bibliography{santraResearch,somu_research,itlabPublications}

\end{document}